\numberwithin{equation}{section} 
\numberwithin{figure}{section} 
\newtheorem{thm}{Theorem}
  \newtheorem{lem}[thm]{Lemma}
  \newtheorem{example}[thm]{Example}
  \newtheorem{rem}[thm]{Remark}
  \newtheorem{fact}[thm]{Fact}
\begin{document}

\clubpenalty=10000

\widowpenalty = 10000 

\title{Robust Mechanisms for Risk-Averse Sellers}


\numberofauthors{2}
\author{
\alignauthor Mukund Sundararajan\\
\affaddr{ Google Inc., \\ Mountain View, CA, USA }
\email{mukunds@google.com}.
\alignauthor Qiqi Yan\thanks{Supported by a
Stanford Graduate Fellowship}\\
\affaddr{Department of Computer Science, \\
Stanford University, \\
Stanford, CA, USA}
\email{qiqiyan@cs.stanford.edu}.
}

\maketitle

\begin{abstract}
The existing literature on optimal auctions focuses on optimizing the \emph{expected revenue} of the seller, and is appropriate for risk-neutral sellers. In this paper, we identify good mechanisms for \emph{risk-averse} sellers.  As is standard in the economics literature, we model the risk-aversion of a seller by endowing the seller with a monotone concave utility function. We then seek robust mechanisms that are approximately optimal for all sellers, no matter what their levels of risk-aversion are.

We have two main results for multi-unit auctions with unit-demand bidders whose valuations are drawn i.i.d.\ from a regular distribution. First, we identify a posted-price mechanism called the Hedge mechanism, which gives a universal constant factor approximation; we also show for the unlimited supply case that this mechanism is in a sense the best possible. Second, we show that the VCG mechanism gives a universal constant factor approximation when the number of bidders is even only a small multiple of the number of items. Along the way we point out that Myerson's characterization of the optimal mechanisms fails to extend to utility-maximization for risk-averse sellers, and establish interesting properties of regular distributions and monotone hazard rate distributions.

\end{abstract}

\category{J.4}{Computer Applications}{Social and Behavioral Sciences}[Economics]

\terms{Economics, Theory, Algorithms}

\keywords{risk-aversion, optimal auctions, revenue maximization, utility}

\section{Introduction\label{sec:intro}}

Auction theory (cf.~\cite{M81, BK96}) typically seeks to optimize the seller's \emph{expected} revenue, which presumes that the seller is \emph{risk-neutral}. The focus of this work is to identify good auction mechanisms for sellers who care about the riskiness of the revenue in addition to the magnitude of the revenue\footnote{We seek ex-post incentive compatible mechanisms. This is in contrast to the standard Bayesian auction theory literature (cf.~\cite{M81,BK96}) that studies Bayesian incentive compatible mechanisms. In our auctions, bidders will therefore maximize utility by truth-telling, and do not have to deal with uncertainty or risk; our model of risk applies only to sellers.}. 

There is an inherent trade-off between the magnitude and riskiness of revenue. Consider the auction of a single-item to a bidder whose valuation is drawn from the uniform distribution over the interval $[0,1]$. Recall that every truthful single-bidder mechanism offers the bidder a take-it-or-leave-it price. If the seller is \emph{risk-neutral} and cares about mean revenue, we must select a price $p$ that maximizes the product of the price $p$ times the probability of sale $1-p$. The price $p=1/2$ is optimal here, achieving a mean revenue of $1/4$, but it yields zero revenue with probability $1/2$. Prices lower than $1/2$ reduce the expected revenue, but increase the certainty with which positive revenue is obtained.

A systematic and standard (cf. Stiglitz and Rothschild~\cite{Stig1}) way to express a bidder's trade-off between the magnitude and riskiness of revenue is to endow the seller with a concave utility function $u:[0,\infty)\to[0,\infty)$ and seek to maximize the seller's expected utility. We will assume throughout that this utility function is monotone and normalized in the sense that $u(0)=0$. Let $Rev(M,\mathbf{v})$ denote the revenue of mechanism $M$ for the  bid-profile $\mathbf{v}$, then the expected utility of $M$ w.r.t.\ a utility function $u$ is $E_{\mathbf{v}}[u(Rev(M,\mathbf{v}))]$. The concavity of the utility function models \emph{risk-aversion}. For instance, the optimal single-bidder mechanism for the utility function $u(x)=\sqrt{x}$ sets a price $p=1/3$ and maximizes the expected utility $ \sqrt{p} \cdot (1-p)$. Increasing the concavity of the utility function increases the emphasis on risk-aversion---the optimal price for the cube-root utility function is $p=1/4$. The linear utility function $u(x)=x$ models a \emph{risk-neutral} seller.
\emph{The goal of this paper is to identify truthful mechanisms that are simultaneously good for the class of all risk-averse agents, i.e., we look for mechanisms that yield near-optimal expected utility for all possible concave utility functions.} 

A useful byproduct of such a guarantee is that we do not need to know the seller's utility function in order to deploy the mechanism. This is useful when the auctioneer is conducting the auction on behalf of a seller (as in the case of eBay), when the seller does not know its utility function precisely, or when the seller's risk attitude changes with time.

The following example illustrates the challenge in the context of a single-item single-bidder auction. Consider two sellers with utility functions $u_{\text{risk-neutral}}(x)=x$, which expresses risk-neutrality, and $u_{\text{risk-averse}}(x)=\min(x,\epsilon)$ for some very small $\epsilon>0$, which expresses strong risk-aversion. Suppose, as before, that there is a single bidder whose valuation is drawn from the uniform distribution with support $[0,1]$. The \emph{unique} optimal mechanism for the first utility function makes a take-it-or-leave-it offer of $1/2$. This gives the first seller a utility of $1/4$, and gives the second seller a utility of $\epsilon \cdot (1-F(1/2)) = \epsilon/2$. Lowering the price to $\epsilon$ improves the second seller's utility to $(1 - \epsilon) \cdot \epsilon$, but reduces the first seller's utility from $1/4$ to $(1 - \epsilon) \cdot \epsilon$. Our challenge in general is to identify mechanisms that simultaneously appease sellers with different levels risk-aversion, ranging from risk-neutral sellers who care about expected revenue to very risk-averse ones who only care about the certainty with which a positive revenue is obtained.

\subsection{Organization}
Section~\ref{sec:prelim} describes our auction model, our distributional assumptions and formalizes our auction objective.
Section~\ref{sec:benchmark} describes the difficulty in characterizing our benchmark and defines a stronger, simpler benchmark. Section~\ref{sec:spm} identifies universally approximate posted-price mechanisms for unlimited and limited supply. Section~\ref{sec:vcg} bounds the universal approximation of the VCG mechanisms for multi-unit auctions as a function of the ratio of the number of bidders to the number of items. Section~\ref{sec:open} concludes with open directions.

\section{Preliminaries}\label{sec:prelim}
\subsection{Auction Model}
Our investigation focuses on multi-unit auctions. We adopt the following standard auction model. There are $n$ unit-demand bidders $1,2,\dots,n$, and $k$
identical indivisible items for sale. A bidder $i$ has a private
valuation $v_{i}$ for winning an item, and $0$ for losing. A mechanism
$\mathcal{M}=(\mathbf{x},\mathbf{p})$ first collects a bid $b_{i}$
from each bidder $i$, then determines the winners by the allocation
rule $\mathbf{x}:\mathbf{b}\to\{0,1\}^{n}$, i.e., bidder $i$ wins
an item if and only if $x_{i}(\mathbf{b})=1$, and finally uses the
payment rule $\mathbf{p}:\mathbf{b}\to[0,\infty)^{n}$ to charge each
bidder $i$ a price $p_{i}(\mathbf{b})$. We will focus our attention on ex post incentive compatible, a.k.a., truthful,\footnote{For any possible bids $\mathbf{b}_{-i}$ of the other bidders, bidder $i$ always maximizes her utility $v_{i}\cdot x_{i}(\mathbf{b})-p_{i}(b_i;\mathbf{b}_{-i})$, by setting her bid $b_i$ to be her true valuation $v_i$.} and ex post individual-rational\footnote{A bidder is never
charged more than her bid, and is only charged when she wins.} mechanisms. Hence we will use the terms bid and valuation interchangeably. We make the standard assumption that valuations are drawn i.i.d.\ from a distribution $F$. The distribution $F$ is known to the seller, but the valuations can be known only to buyers.

\subsection{Auction Objective}

Let $Rev(M,\mathbf{v})$ denote the revenue of mechanism $M$ for the input bid-profile $\mathbf{v}$. Then the expected revenue of $M$ is $E_{\mathbf{v}}[u(Rev(M,\mathbf{v}))]$. Notice that the expectation is over the bids (or valuations), which is the standard auction objective in Bayesian revenue maximization. We model the risk-attitude of a specific seller by endowing the seller with a concave utility function $u:[0,\infty)\to[0,\infty)$. We will assume throughout that this utility function is monotone and normalized in the sense that $u(0)=0$. Then the expected utility of $M$ w.r.t.\ a utility function $u$ is $E_{\mathbf{v}}[u(Rev(M,\mathbf{v}))]$. As discussed in the introduction, the concavity of the utility function models risk-aversion.

Recall that the goal of this paper is to identify truthful mechanisms that are simultaneously good for the class of all risk-averse agents, i.e., we look for mechanisms that yield near-optimal expected utility for all possible  concave normalized utility functions. More precisely, for each risk-averse seller, the truthful mechanism $M_u^*$ that maximizes the seller's expected utility is a benchmark against which we measure our proposed mechanism (say $M$)--we quantify the goodness of this mechanism for this seller by the approximation ratio $U(M)/U(M^*_u)$, where $U(X)$ denotes the expected utility of mechanism $X$. The goodness of the mechanism is then the worst-case approximation ratio over all concave utility functions, i.e. $\rho=min_{u} U(M)/U(M^*_u)$; in this case, we will say that the mechanism is a universal $\rho$-approximation. For each of the auction settings we consider, we will try to find a mechanism $M$ that maximizes $\rho$.

\subsection{Distributional Assumptions}
\label{sec:dist}

For technical convenience, we will assume that the distribution $F$ has a smooth positive density function, and has non-negative support.
We will in addition assume that the distribution $F$ from which the valuation is drawn satisfies a standard \emph{regularity} condition  (cf.~\cite{M81},~\cite{BK96}). 

Every distribution function $F$ corresponds to a revenue function $R$ from domain $[0,1]$ (or $(0,1]$ if the support of $F$ is infinite) to the non-negative reals defined as follows: for all $q$, $R_F(q) =  q\cdot F^{-1}(1-q)$. (we will drop the subscript when it is clear from the context) Note that $R(0)=0$ (or $R(q) \rightarrow 0$ as $q \rightarrow 0$) and $R(1)=0$, and we can often define a distribution $F$ by specifying the corresponding $R_F(\cdot)$ function. We say a distribution $F$ is \emph{regular} if the revenue function $R_F(\cdot)$ w.r.t.\ $F$ is strictly concave. This is also equivalent to the more commonly used definition that virtual valuation $\phi_{F}(v)=v-1/h(v)$ is nondecreasing in $v$, where $h(v)=\frac{f(v)}{1-F(v)}$ is the hazard rate function w.r.t.\ $F$. We say $F$ satisfies the \emph{monotone hazard rate} condition (or simply $F$ is m.h.r.), if $h(v)$ is nondecreasing in $v$. Many important distributions are regular and m.h.r, including uniform, exponential, normal, while other distributions such as some power-law distributions are regular but not m.h.r.~\cite{Ewe09}.

To justify our use of the regularity assumption, the following example shows that no universal constant factor approximation is possible without assumptions on the distribution $F$.

\begin{example}
Recall the utility functions $u_{\text{risk-neutral}}$ and $u_{\text{risk-averse}}$ defined in the introduction. Define $R$ as $R(0)=R(1)=0$, $R(\epsilon)=1$, $R(2\epsilon)=\epsilon$, $R(1-\epsilon)=\epsilon$, and let $R$ be linear in all four intervals between these five points; here '$\epsilon$' refers to the quantity in the definition of $u_{\text{risk-averse}}$ (see introduction). Smoothen $R$ by a negligible amount such that the corresponding $F$ function satisfies our smoothness assumption on distributions. Consider a single bidder whose valuation function is drawn from $F$, which is clearly an irregular distribution.

Thus to achieve a constant fraction of optimal utility for $u_{\text{risk-neutral}}$ means that we have to sell with a probability in the range of $[0,\epsilon]$, i.e., at a price of at least $1/2$, which implies that we get at most $2\epsilon^2$ utility for $u_{\text{risk-averse}}$.
\end{example}

\subsection{Results and Techniques} \label{sec:results}

We first show that the 'virtual value' based approach employed by Myerson~\cite{M81} for the risk-neutral case extends to risk-averse single-item auctions, but not (to the best of our knowledge) to auctions of two or more items (see Section~\ref{sec:benchmark}). We then present three results. First, when the supply is unlimited (or equivalently, the number of items $k$ is equal to the number of bidders $n$), we identify a mechanism called the Hedge mechanism that is a universal $1/2$-approximation (see Theorem~\ref{thm:unlimited}). The ratio improves to nearly $0.7$ with the assumption that the distribution satisfies a standard hazard rate condition. The Hedge mechanism is a posted-price mechanism, which offers every bidder a take-it-or-leave-it offer $p$ in a sequential order so long as supply lasts. We choose the price $p$ to be less than the optimal price for a risk-neutral seller so as to guarantee a good probability of sale to any bidder at a good revenue level. Moreover, this mechanism is the best possible in the sense that no mechanism can be a universal $\rho$-approximation for $\rho>1/2$ (see Theorem~\ref{thm:lb}). This impossibility result identifies a certain heavy-tailed regular distribution, called the left-triangle distribution that exhibits the worst-case  trade-off between riskiness and magnitude of revenue over all regular distributions. Second, when the supply is limited (number of items $k$ is less than the number of bidders $n$), we identify a sequential posted-price mechanism that gives a universal $1/8$-approximation by modifying the Hedge mechanism to handle the supply constraint (see Theorem~\ref{thm:limited-regular-1}). The key to this modification is to use a certain limited supply auction to guide the choice of the posted price. Third, we will show that the VCG mechanism \cite{vic-61,cla-71,gro-73} yields a universal approximation ratio close to $1/4$ under moderate competition, i.e., when $n$ is a reasonable multiple of $k$ (see Theorem~\ref{thm:vcg}). Recall that for a $k$-item auction the VCG mechanism is a $k+1$-st price auction, in which the top $k$ bidders win and get charged the $k+1$-st highest bid. We prove our result by establishing a probability bound for the $k+1$-st order statistic of $n$ i.i.d.\ draws from a regular distribution.

\subsection{Related Work}

Myerson~\cite{M81} identifies the optimal single-item mechanism for a risk-neutral seller and has inspired a large body of work (cf. Chapter 13 from~\cite{AGTBook}). 

There is some work that deals with risk in the context of auctions. Eso~\cite{risk99} identifies an optimal mechanism for a risk-averse seller, which always provides the same revenue \emph{at every} bid vector by modifying Myerson's optimal mechanism; unfortunately, this mechanism does not satisfy ex-post (or even ex-interim) individual rationality, and charges bidders even when they lose. Maskin and Riley~\cite{buyers81} identifies the optimal Bayesian-incentive compatible mechanism for a risk-neutral seller when the \emph{bidders} are risk-averse. In our model, we identify mechanisms that are ex-post incentive compatible. So the buyers optimize their utility bidding truthfully for every realization of the valuations, and thus have no uncertainty or risk to deal with. Hu et al.\ \cite{hu10} studies risk-aversion in single-item auctions. Specifically, they show for both the first and second price mechanisms that the optimal reserve price reduces as the level of risk-aversion of the seller increases. In contrast, we identify the optimal truthful mechanism for a risk-averse seller in a single-item auction in Section~\ref{sec:benchmark} (it happens to be a second price mechanism with a reserve), study auctions of two or more items and identify mechanisms that are simultaneously approximate for all risk-averse sellers.

An alternative simpler model of risk different from the one we adopt is to optimize for a trade-off between the mean and the variance of the auction revenue, i.e., $E[R] - t \cdot Var[R]$. However, as Section 2A in Stiglitz and Rothschild~\cite{Stig2} shows, this approach does not capture all the types of behavior intuitively consistent with risk-aversion, because this approach restricts the form of seller utility functions. Our model of risk-aversion is inspired in part by Stiglitz and Rothschild~\cite{Stig1}.

There is significant literature on prior-free optimal auctions (see Chapter 13 from~\cite{AGTBook}). In this framework, the benchmark (in the unlimited supply case,  the revenue from the optimal price for that bid vector constrained to serve at least $2$ bidders) is defined independently for each bid vector, and the performance of the mechanism is measured worst-case over all bid-vectors. In contrast, in our framework, as in all Bayesian auction theory, the mechanism's performance is measured in expectation over the distribution of the bids. However, we believe that it is worth investigating the risk properties of the mechanisms proposed in this literature, which ought to yield universal constant factor approximations in several auction settings.

Finally, we mention papers that inspire our proof techniques. Chawla et al.~\cite{CHMS10} proposes posted-price mechanisms, and it uses Myerson's mechanism to guide the selection of the prices. We use a similar idea in Section~\ref{sub:limited}. Bulow and Klemperer~\cite{BK96} shows that the VCG mechanism with $k$ extra bidders yields better expected revenue than the optimal mechanism so long as the bidder valuations are drawn i.i.d.\ from a regular distribution.  Dughmi et al.~\cite{Dughmi09} extends the result of Bulow and Klemperer~\cite{BK96} to matroid settings, and introduces the problem of designing markets with good revenue properties. We use ideas from these papers to bound the performance of the VCG mechanism in Section~\ref{sec:vcg}. The characterization of regular distributions in terms of concave revenue functions is implicit in Myerson~\cite{M81}, and is used explicitly in Chawla et al.~\cite{CHMS10} and Dhangwatnotai et al.~\cite{DRY10}.

\section{On Utility-Optimal Mechanisms}
\label{sec:benchmark}

Recall from the introduction that we would like to design mechanisms that yield a good approximation of the optimal expected utility for each concave utility function. Our \emph{benchmark} for a specific utility function $u$ is the truthful individually rational mechanism that maximizes the expected utility w.r.t.\ $u$. In this section we focus on getting a handle on such a mechanism for a \emph{fixed} utility function $u$. We show that the result of Myerson~\cite{M81} can be extended to identify the optimal mechanism for the single item case, but not for auctions of two or more items. For the rest of the paper, we use the stronger simpler benchmark from Fact~\ref{fact:upperbound}.

Myerson's characterization says that the \emph{expected revenue} of \emph{any} truthful mechanism equals the expected total virtual valuation served by the mechanism. It generates a prescription for the allocation and payments of the optimal \emph{risk-neutral} truthful mechanism on a specific input bid vector. In the single-item case, to generalize Myerson's characterization to auctions with risk-averse sellers, we generalize the notion of virtual valuation to take risk-aversion into account: given a distribution $F$ and a concave utility function $u$, we define the \emph{virtual utility} function as $\phi_{F}^{u}(v)=u(v)-u'(v)/h(v)$. As in the case of virtual valuations, the virtual utility $\phi_{F}^{u}(v)$ is
the derivative $\frac{d}{d\left(1-F(v)\right)} u(v) \left(1-F(v)\right)$ of the expected utility from a bid-independent take-it-or-leave-it offer $v$ to a single bidder. We then have the following:

\begin{lem}
\label{lem:virtual_utility}In a single-item auction, for any mechanism
$M=(\mathbf{x},\mathbf{p})$ and concave utility function $u$, the expected utility of the mechanism, $E_{\mathbf{v}}[u(Rev(M,\mathbf{v}))]$, is equal to the expected virtual valuation served $E_{\mathbf{v}}[\sum_{i}\phi_{F}^{u}(v_{i})\cdot x_{i}(\mathbf{v})]$.\end{lem}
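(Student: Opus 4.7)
The plan is to follow Myerson's classical argument, replacing the linear utility by the concave $u$ and handling the single-item constraint carefully so that the payment rule's concavity interacts with $u$ in exactly the right way.

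First I would invoke truthfulness: ex-post incentive compatibility with an IR constraint forces, for each bidder $i$ and each fixed profile $\mathbf{v}_{-i}$, the allocation $x_i(v_i;\mathbf{v}_{-i})$ to be monotone in $v_i$, and the payment to be the threshold bid $t_i(\mathbf{v}_{-i})=\inf\{v_i:x_i(v_i;\mathbf{v}_{-i})=1\}$ whenever $i$ wins, and $0$ otherwise. In a single-item auction at most one $x_i$ can be $1$ on any profile, so the revenue on profile $\mathbf{v}$ can be written as $Rev(M,\mathbf{v})=\sum_i t_i(\mathbf{v}_{-i})\cdot x_i(\mathbf{v})$, and since $u(0)=0$ the same equation lifts through $u$ to give $u(Rev(M,\mathbf{v}))=\sum_i u(t_i(\mathbf{v}_{-i}))\cdot x_i(\mathbf{v})$.

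Next I would compute both sides, fixing $\mathbf{v}_{-i}$ and writing $t$ for $t_i(\mathbf{v}_{-i})$. On the LHS, $E_{v_i}[u(t)\,x_i(v_i;\mathbf{v}_{-i})]=u(t)\,(1-F(t))$. On the RHS, I would unfold the definition of $\phi_F^u$ and integrate by parts:
\[
E_{v_i}[\phi_F^u(v_i)\,x_i] = \int_{t}^{\infty}\!\!\bigl(u(v)-u'(v)/h(v)\bigr)f(v)\,dv = \int_{t}^{\infty}\!\!u(v)f(v)\,dv-\int_{t}^{\infty}\!\!u'(v)(1-F(v))\,dv,
\]
and integration by parts on the second integral (using $d[u(v)(1-F(v))]=u'(v)(1-F(v))\,dv-u(v)f(v)\,dv$) collapses the expression to $u(t)(1-F(t))$, matching the LHS. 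Summing over $i$ and taking the outer expectation over $\mathbf{v}_{-i}$ gives the claimed identity.

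The main technical nuisance, and the only place where care is needed, is the boundary term at infinity from the integration by parts: one needs $u(v)(1-F(v))\to 0$ as $v\to\infty$. This is immediate when $F$ has bounded support; for unbounded support I would note that the expected utility $E_{\mathbf{v}}[u(Rev(M,\mathbf{v}))]$ is only meaningful (and only interesting to approximate) when it is finite, and a mild tail/growth assumption on $u$ relative to $F$ — implicit in our setup that $u$ is concave, monotone, normalized and that $F$ has a smooth positive density — ensures the boundary term vanishes. Apart from this, the argument is just Myerson's characterization with $v$ replaced by $u(v)$ in the ``area under the curve'' computation, so no additional structural ingredient is needed.
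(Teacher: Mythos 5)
Your proposal is correct and follows essentially the same route as the paper's proof: decompose the revenue across bidders using the single-item constraint together with $u(0)=0$, fix $\mathbf{v}_{-i}$ so that truthfulness induces a threshold (posted) price for bidder $i$, and check that the expected utility $u(t)(1-F(t))$ from that price equals the expected virtual utility above the threshold. You merely spell out the integration by parts and the vanishing boundary term $u(v)(1-F(v))\to 0$ that the paper leaves implicit, which is a harmless (indeed slightly more careful) elaboration rather than a different argument.
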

\begin{proof}
The expected utility of the mechanism is: \begin{eqnarray*}
E_{\mathbf{v}}[u(Rev(M,\mathbf{v}))] & = & E_{\mathbf{v}}[u(\sum_{i}p_{i}(\mathbf{v}))]\\
 & = & \sum_{i}E_{\mathbf{v}_{-i}}[E_{v_{i}}[u(p_{i}(\mathbf{v}))]]\\
 & = & \sum_{i}E_{\mathbf{v}_{-i}}[E_{v_{i}}[\phi_{F}^{u}(v_{i})\cdot x_{i}(\mathbf{v})]]\\
 & = & E_{\mathbf{v}}[\sum_{i}\phi_{F}^{u}(v_{i})\cdot x_{i}(\mathbf{v})]\end{eqnarray*}

Here the second equality holds because we sell to at most 1 bidder. The third equality holds because when $\mathbf{v}_{-i}$ is fixed, the mechanism induces a fixed offer price, say $p'$, for bidder $i$. So $E_{v_{i}}[u(p_{i}(\mathbf{v}))] = u(p') \left(1-F(p') \right)$, which is equal to $\int_{p'}^\infty \left(u(v)-u'(v)/h(v)\right)f(v) dv$, which is  $E_{v_{i}}[\phi_{F}^{u}(v_{i})\cdot x_{i}(\mathbf{v})]$, the expected virtual utility we get from bidder $i$.
\end{proof}

We can now use the lemma to show that the optimal mechanism for a seller with utility function $u$ is a second price auction with a reserve price---a mechanism that is well-known to be truthful. Consider the second price mechanism with a reserve  $r_u^*$, where $r_u^*$ solves that $\phi_{F}^{u}(r_u^*)=0$. When the distribution is regular, the virtual utility function is nondecreasing in the valuation (see Lemma~\ref{lem:u-regular} in the appendix). So the above mechanism allocates the item to the bidder with the highest virtual utility, so long as there is at least one bidder with non-negative virtual utility. (When the distribution is not regular, and in particular when the virtual utility function is not monotone, one can apply the ironing procedure of Myerson to identify the optimal mechanism as the one that maximizes the total ironed virtual utility served.)

In Section~\ref{sec:vcg} we will present another application of the above characterization that shows that the single-item Vickrey auction has good revenue properties. However, this characterization does not extend to auctions where more than one items are for sale. The first step of the proof of Lemma~\ref{lem:virtual_utility}, which sums the contributions of the bidders independently, only works because a single-item auction sells to and charges at most one bidder. When there are more than one items for sale, that step is still sound if the utility function is linear (the risk-neutral case), but it does not work for strictly concave utility functions.

We now identify an upper bound on the expected utility of utility-optimal mechanism that applies to auction settings beyond single-item auctions. We will use this upper bound as a benchmark for analysis. For any mechanism $M$ and concave utility function $u$, the expected utility of the mechanism $E_{\mathbf{v}}[u(Rev(M,\mathbf{v}))]$ is upper-bounded by the utility function applied to the expected revenue $u(E_{\mathbf{v}}[Rev(M,\mathbf{v})])$ by Jensen's inequality, which is then upper-bounded by the utility function applied to the expected revenue of Myerson's revenue-optimal mechanism $Mye$, $u(E_{\mathbf{v}}[Rev(Mye,\mathbf{v})])$, because a utility function is monotone. So we have the following:

\begin{fact}\label{fact:upperbound} For any mechanism $M$\footnote{We shall only work with deterministic mechanisms, but in fact we can allow the mechanism here to be randomized.},
 and any concave
utility function $u$, the expected utility of $M$ is upper-bounded
by the utility function applied to the expected revenue of Myerson's
mechanism, i.e., $E_{\mathbf{v}}[u(Rev(M,\mathbf{v}))]\leq u(E_{\mathbf{v}}[Rev(Mye,\mathbf{v})])$.\end{fact}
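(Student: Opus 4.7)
The plan is to prove Fact~\ref{fact:upperbound} by combining two standard observations, essentially formalizing the one-paragraph sketch that precedes the statement.

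First, I would apply Jensen's inequality. Since $u:[0,\infty)\to[0,\infty)$ is concave and $Rev(M,\mathbf{v})$ is a non-negative random variable (with the randomness over the i.i.d.\ draw of $\mathbf{v}$ from $F^n$, plus any internal randomness of $M$ if we allow randomization), concavity gives
$$E_{\mathbf{v}}\bigl[u(Rev(M,\mathbf{v}))\bigr] \;\le\; u\bigl(E_{\mathbf{v}}[Rev(M,\mathbf{v})]\bigr).$$
For this step I only need that $u$ is defined and concave on the (almost sure) range of $Rev(M,\cdot)$, which is $[0,\infty)$ by our modeling assumptions; finiteness of $E_{\mathbf{v}}[Rev(M,\mathbf{v})]$ can be taken as a mild regularity hypothesis (otherwise the right-hand side is $u(\infty)$ and the bound is vacuous).

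Second, I would invoke Myerson's theorem: among all ex post incentive-compatible and individually-rational mechanisms, Myerson's mechanism $Mye$ maximizes the expected revenue for bids drawn i.i.d.\ from $F$. In particular,
$$E_{\mathbf{v}}[Rev(M,\mathbf{v})] \;\le\; E_{\mathbf{v}}[Rev(Mye,\mathbf{v})].$$
Because $u$ is monotone nondecreasing, applying $u$ to both sides preserves the inequality, yielding
$$u\bigl(E_{\mathbf{v}}[Rev(M,\mathbf{v})]\bigr) \;\le\; u\bigl(E_{\mathbf{v}}[Rev(Mye,\mathbf{v})]\bigr).$$
Chaining this with the Jensen bound gives the claim.

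There really is no substantial obstacle here; the only points deserving care are (i) verifying the hypotheses of Jensen's inequality (concavity of $u$ on the support of $Rev$, which is immediate), (ii) making sure we quote Myerson's revenue-optimality against the same class of mechanisms we consider (ex post IC and IR, which is a subclass of the Bayesian IC/IR class for which Myerson's bound holds, so the inequality is only strengthened), and (iii) noting that the argument is agnostic to whether $M$ is deterministic or randomized, since we may take expectations over internal randomness before applying Jensen. Together these observations yield the stated upper bound, which the rest of the paper uses as a simpler surrogate benchmark in place of the hard-to-characterize utility-optimal mechanism.
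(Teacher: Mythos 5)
Your proposal is correct and matches the paper's own argument, which is exactly the same two-step chain: Jensen's inequality applied to the concave $u$ to bound the expected utility by $u$ of the expected revenue, followed by Myerson's revenue-optimality together with monotonicity of $u$. The extra remarks on Jensen's hypotheses and on randomized mechanisms are fine but not needed beyond what the paper already notes.
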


\section{Universally Approximate Sequential Posted-Price Mechanisms}
\label{sec:spm}

In this section we propose sequential posted-price mechanisms (or SPM in short) for multi-unit auctions. In an SPM, a take-it-or-leave-it price is offered to each bidder one by one in arbitrary order, as long as supply lasts. An obvious advantage of such mechanisms is that they can be applied to both offline and online settings and are collusion-resistant in the sense of Goldberg and Hartline~\cite{GH-05}.

\subsection{The Unlimited Supply Case\label{sub:unlimited}}

Fix a regular distribution $F$ from which the valuations are drawn i.i.d. We now identify an SPM that offers every bidder the same take-it-or-leave-it offer $p$, and show that this mechanism is universally $1/2$-approximate for all regular distributions, and $0.69$-approximate for all m.h.r.\ distributions. Let $p^*$ is the optimal price that maximizes $p(1-F(p))$, and $q^*=1-F(p)$. Setting the offer price $p$ to be $p^*$ yields the optimal expected revenue, but the probability of sale for each bidder can be very low. Intriguingly, we find that reducing the offer price to $p^*q^*$ is optimal, i.e., the discount factor is precisely the probability of sale at the optimal price for a risk-neutral seller in a single item-single bidder auction. We call this SPM with posted price $p=p^*$ the \emph{Hedge Mechanism}. Theorem~\ref{thm:unlimited} shows that this achieves a universal $1/2$ approximation for regular distributions (universal $0.69$-approximation for m.h.r.\ distributions), and Theorem~\ref{thm:lb} shows that we cannot do better.

\includegraphics{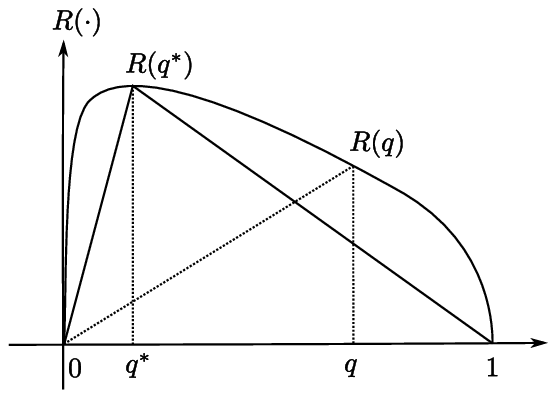}

To analyze the performance of the Hedge mechanism, the following property
of regular distributions is crucial.

\begin{lem}
\label{lem:regular}
For all regular distribution $F$, we have $1-F(p^{*}q^{*})\geq 1/2$.
\end{lem}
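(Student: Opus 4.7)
The plan is to work in the revenue-function picture, using the fact that regularity of $F$ means $R_F(\cdot)$ is concave. Let $q' = 1 - F(p^* q^*)$ be the probability of sale at the Hedge price $p^* q^*$; the claim is equivalent to $q' \ge 1/2$. I would first note the trivial but useful observation that since $q^* \le 1$, the Hedge price satisfies $p^* q^* \le p^*$, which gives $q' \ge q^*$. So we are on the ``right side'' of the revenue curve's peak (in the quantity parametrization).

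Next, I would evaluate $R(q')$ in two different ways. By the definition of $q'$ and the relation $R(q) = q \cdot F^{-1}(1-q)$, we have
\begin{equation*}
R(q') \;=\; q' \cdot F^{-1}(1-q') \;=\; q' \cdot (p^* q^*) \;=\; q' \cdot R(q^*),
\end{equation*}
where the last equality uses $R(q^*) = p^* q^*$. On the other hand, by concavity of $R$ on $[q^*, 1]$, the graph of $R$ lies above the chord joining $(q^*, R(q^*))$ and $(1, 0)$, so
\begin{equation*}
R(q') \;\ge\; R(q^*) \cdot \frac{1 - q'}{1 - q^*}.
\end{equation*}

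Combining these two facts and canceling $R(q^*) > 0$ yields $q'(1 - q^*) \ge 1 - q'$, which rearranges to
\begin{equation*}
q' \;\ge\; \frac{1}{2 - q^*} \;\ge\; \frac{1}{2},
\end{equation*}
since $q^* \in [0,1]$. Translating back, $1 - F(p^* q^*) = q' \ge 1/2$, which is precisely the claim.

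There is no real obstacle here beyond setting up the right parametrization: once one recognizes that concavity of $R$ gives a chord bound on $[q^*, 1]$ and that the point $p^* q^*$ was chosen precisely so that $R(q')$ takes the clean form $q' \cdot R(q^*)$, the rest is a one-line manipulation. The only technical care needed is that $F$ is strictly increasing on its support (guaranteed by the positive-density assumption in Section~\ref{sec:dist}), so that $F^{-1}$ is well-defined and $R(q') = q' \cdot p^* q^*$ holds with equality.
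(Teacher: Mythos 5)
Your proof is correct and follows essentially the same route as the paper: both evaluate $R$ at the sale probability $q'=1-F(p^*q^*)$, use $R(q')=q'\cdot p^*q^*$ together with the concavity (chord) bound between $(q^*,R(q^*))$ and $(1,R(1))$, and conclude $q'\geq 1/2$. Your intermediate bound $q'\geq 1/(2-q^*)$ is a slightly sharper rearrangement of the same inequality, but the argument is the paper's.
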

\begin{proof}
Let $q=1-F(p^* q^*)$. Note that $q\geq q^*$ because $p^* q^*\leq p^*$. Let $R(\cdot)$ be $F$'s revenue function, which is concave by regularity. The fact that $q\geq 1/2$ follows from the following inequalities:

\begin{eqnarray*}
q &=& R(q)/(p^*q^*)\\
&\geq& \left( R(q^*)\frac{1-q}{1-q^*}+ R(1)\frac{q-q^*}{1-q^*}  \right)/(p^*q^*)\\
&\geq& \left( (p^*q^*)\frac{1-q}{1-q^*}\right)/(p^*q^*)\\
&=&\frac{1-q}{1-q^*} \\
&\geq& 1-q
\end{eqnarray*}

The first step is by the definition of $q$. The second step is by the concavity of $R$. (In the above figure, note that $(q,R(q))$ is above the line segment connecting $(q^*,R(q^*))$ and $(1,R(1))$). The third step is because $R(q^*)=p^*q^*$ and $R(1)$ is non-negative.
\end{proof}

When the distribution $F$ is further assumed to be m.h.r., we can improve the constant to $e^{-1/e}$.

\begin{lem}
\label{lem:mhr} For any m.h.r.\ distribution $F$, let $p^*$ maximize $p (1-F(p))$ and $q^* = 1-F(p^*)$. Then we have that $1-F(p^{*}q^{*})\geq e^{-1/e}\approx 0.6922$.\end{lem}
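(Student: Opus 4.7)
The plan is to work with the cumulative hazard function $H(v)=\int_0^v h(t)\,dt$, so that $1-F(v)=e^{-H(v)}$, and to exploit the monotonicity of $h$ together with the first-order optimality condition for $p^*$. Differentiating $p(1-F(p))$ at $p^*$ yields $p^* h(p^*)=1$, i.e.\ $h(p^*)=1/p^*$. Then by m.h.r.\ monotonicity, $H(p^*)=\int_0^{p^*} h(t)\,dt \leq p^* h(p^*)=1$, which gives the clean intermediate bound $q^*=e^{-H(p^*)}\geq e^{-1}$. This already says the probability of sale at the optimal price is at least $1/e$ for any m.h.r.\ distribution.

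Next I would lower-bound $1-F(p^* q^*) = e^{-H(p^* q^*)}$ by estimating $J:=H(p^* q^*)$ from above. Monotonicity of $h$ yields the two comparisons $J \leq p^* q^* \cdot h(p^* q^*)$ and $\int_{p^* q^*}^{p^*} h(t)\,dt \geq p^*(1-q^*)\cdot h(p^* q^*)$. The first rearranges to $h(p^* q^*) \geq J/(p^* q^*)$; substituting into the second and using $H(p^*)=-\ln q^*$, I get $-\ln q^* = J + \int_{p^* q^*}^{p^*} h(t)\,dt \geq J + J(1-q^*)/q^* = J/q^*$, so $J \leq -q^*\ln q^*$. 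Therefore $1-F(p^* q^*) \geq e^{q^* \ln q^*} = (q^*)^{q^*}$.

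To finish, observe that the function $q\mapsto q^q$ on $(0,1]$ is minimized at $q=1/e$ with value $e^{-1/e}$: its derivative $q^q(\ln q+1)$ vanishes at $q=1/e$, and is negative to the left and positive to the right. Combined with $q^* \geq 1/e$ from the first step, this yields $1-F(p^* q^*) \geq (q^*)^{q^*} \geq e^{-1/e}$, as claimed. The main obstacle I anticipate is the middle step: tying together the integrals on both sides of $p^* q^*$ through the common anchor $h(p^* q^*)$ to sharpen the naive bound $H(p^* q^*) \leq q^*$ (which only gives $e^{-1}$) into the tighter $H(p^* q^*) \leq -q^*\ln q^*$. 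This sharper bound is what makes the exponential distribution the extremal case, saturating the constant $e^{-1/e}$.
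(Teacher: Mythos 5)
Your proof is correct and follows essentially the same route as the paper: both pass to the cumulative hazard $H$, use the first-order condition $p^*h(p^*)=1$ to get $H(p^*)\le 1$, and establish the key bound $H(p^*q^*)\le q^*H(p^*)=-q^*\ln q^*$ --- the paper by quoting convexity of $H$, you by the equivalent direct comparison of $h$ on either side of $p^*q^*$ --- before finishing with the fact that $xe^{-x}\le 1/e$ (equivalently $q^q\ge e^{-1/e}$). The only cosmetic differences are that you avoid the normalization $p^*=1$, and that your final appeal to $q^*\ge 1/e$ is unnecessary, since $(q^*)^{q^*}\ge e^{-1/e}$ already holds for all $q^*\in(0,1]$.
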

\begin{proof}

W.l.o.g., we can let $p^{*}=1$ by scaling the valuation space.
Let cumulative hazard rate function $H(x)$ be $\int_{0}^{x}h(t)dt$, and note that the monotone hazard rate condition implies that $H(x)$ is monotone, convex, and normalized ($H(0)=0$). Note that at the price $p^{*}=1$, the virtual valuation is 0, i.e., $1-1/h(1)=0$. So $h(1)=1$. Further, the function $h$ is nondecreasing. So $H(1)=\int_{0}^{1}h(t)dt\leq1\cdot h(1)=1$.
Our claim follows from the following inequalities:

\begin{eqnarray*}
q &=& 1-F(p^* q^*)\\
&=& 1-F(q^*)\\
&=& e^{-H(q^{*})}\\
&=&e^{-H(1-F(p^{*}))}\\
&=&e^{-H(e^{-H(p^{*})})}\\
&=& e^{-H(e^{-H(1)})}\\
&\geq& e^{-H(1)e^{-H(1)} }\\
&\geq& e^{-1/e}
\end{eqnarray*}
The first step is by definition of $q$ and $R(q)$. The second and sixth steps are because $p^*=1$. The third and fifth steps are because the distribution function can be written in terms of the cumulative hazard rate function: $F(x) = 1 - e^{-H(x)}$. The seventh step is because $H(e^{-H(1)}) \leq e^{-H(1)}H(1)$ by the convexity of $H$ and that $H(1)\leq1$. The last step holds because $e^{-x}\cdot x$ is at most $1/e$ for $x\in[0,1]$.
\end{proof}

We now use the bounds in the previous two lemmas to complete the proof of the theorem.

\begin{thm}
\label{thm:unlimited}In a multi-unit auction with unlimited
supply, where bidders' valuations are drawn i.i.d.\ from a regular (or m.h.r)
distribution $F$, the $Hedge$ mechanism is a universal $0.5$ (or $e^{-1/e} \approx0.6922$)-approximation. \end{thm}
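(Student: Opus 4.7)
The plan is to combine Fact~\ref{fact:upperbound} with the two lemmas just proved, using the sub-linearity property of any monotone concave function normalized with $u(0)=0$.

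First, I would identify the benchmark. By Fact~\ref{fact:upperbound}, for every concave utility $u$ the optimal expected utility $U(M_u^*)$ is at most $u(E_{\mathbf{v}}[Rev(Mye,\mathbf{v})])$. In the unlimited supply case, Myerson's revenue-optimal mechanism simply offers each bidder the monopoly price $p^{*}$, so the expected revenue is exactly $n\,p^{*}q^{*}$. Hence
\[
U(M_u^*)\;\leq\; u(n\,p^{*}q^{*}).
\]

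Next I would analyze the Hedge mechanism. It posts price $p^{*}q^{*}$ to each of the $n$ bidders, and each accepts independently with probability $\tilde q:=1-F(p^{*}q^{*})$. Writing $X$ for the (Binomial$(n,\tilde q)$) number of sales, the revenue of Hedge is $X\cdot p^{*}q^{*}$, so its expected utility equals $E[u(X\cdot p^{*}q^{*})]$.

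The key step is a short concavity estimate: for any monotone concave $u$ with $u(0)=0$ and any $\alpha\in[0,1]$, $u(\alpha y)\geq \alpha\,u(y)$ (since $u(\alpha y+(1-\alpha)\cdot 0)\geq \alpha u(y)+(1-\alpha)u(0)$). Applying this with $\alpha=X/n\leq 1$ and $y=n\,p^{*}q^{*}$ gives $u(X\cdot p^{*}q^{*})\geq (X/n)\,u(n\,p^{*}q^{*})$ pointwise. Taking expectations,
\[
E[u(X\cdot p^{*}q^{*})]\;\geq\;\frac{E[X]}{n}\,u(n\,p^{*}q^{*})\;=\;\tilde q\cdot u(n\,p^{*}q^{*}).
\]

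Finally I would plug in the bounds on $\tilde q$: Lemma~\ref{lem:regular} gives $\tilde q\geq 1/2$ for any regular $F$, and Lemma~\ref{lem:mhr} gives $\tilde q\geq e^{-1/e}$ when $F$ is m.h.r. Combining with the benchmark bound yields $E[u(Rev(Hedge,\mathbf{v}))]\geq \tilde q\cdot U(M_u^*)$, proving the two claimed universal approximation ratios. There is no real obstacle here once one notices that the benchmark collapses to $u(n p^{*}q^{*})$ in the unlimited supply setting; the only mild subtlety is remembering to invoke the ``$u(\alpha y)\geq \alpha u(y)$'' inequality rather than (the wrong-direction) Jensen, since we are comparing $E[u(X\cdot p^{*}q^{*})]$ with $u$ evaluated at the \emph{maximum} possible revenue $n p^{*}q^{*}$, not at the mean.
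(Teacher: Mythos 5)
Your proposal is correct and follows essentially the same route as the paper: bound the benchmark by $u(n p^{*}q^{*})$ via Fact~\ref{fact:upperbound} (since each bidder contributes at most $p^{*}q^{*}=\max_p p(1-F(p))$ to Myerson's expected revenue), then use the pointwise inequality $u(\alpha y)\geq \alpha u(y)$ for monotone concave $u$ with $u(0)=0$, together with Lemma~\ref{lem:regular} (resp.\ Lemma~\ref{lem:mhr}) to lower-bound the expected fraction of accepting bidders. The only difference is presentational: you make explicit the identification of Myerson's unlimited-supply revenue as $n p^{*}q^{*}$, which the paper leaves implicit when it invokes Fact~\ref{fact:upperbound}.
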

\begin{proof}

We prove for the regular case; for the proof of the m.h.r.\ case we simply use the bound from Lemma~\ref{lem:mhr} instead of the bound from Lemma~\ref{lem:regular}. Fix a concave utility function $u$. For each bidder $i$, let 0-1 random variable $X_{i}$ indicate whether
bidder $i$'s bid is at least $p^*q^*$.

\begin{eqnarray*}
\mbox{Expected Utility of $Hedge$} &=&
E[u(\sum_{i}X_{i}\cdot p^{*}q^{*})] \\
&\geq& E[\frac{\sum_{i}X_{i}}{n}]\cdot u(np^{*}q^{*}) \\
&\geq& 0.5\cdot u(np^{*}q^{*}) \\
&\geq& 0.5\cdot \text{Optimal Expected Utility}
\end{eqnarray*}

The first step is because the sale price is $p^*q^*$. The second step is by monotonicity and concavity of $u$ and because $0\leq\sum_{i}X_{i}\cdot p^{*}q^{*}\leq np^{*}q^{*}$. The third step is by Lemma~\ref{lem:regular}, and hence $E[\sum_{i}X_{i}]\geq n/2$. Applying Fact~\ref{fact:upperbound} completes the proof.
\end{proof}

\begin{rem}
If bidders' valuations are drawn from non-identical but independent regular distributions, we can identify distinct offer prices for each bidder $i$, $p^*_i \cdot q^*_i$, (here $p^*_i$ is the price that maximizes the expected revenue in a single bidder-single item auction with bidder $i$; and $q^*_i$ is the sale probability at that price), such that the guarantee in Theorem~\ref{thm:unlimited} holds. 
\end{rem}

The following lemma shows that the ratios in Theorem~\ref{thm:unlimited} cannot be improved. The proof identifies a certain left-triangle distribution that exhibits worst-case behavior over regular distributions, and shows that the exponential distribution exhibits worst-case behavior over all m.h.r.\ distributions. The proof elucidates why the price $p^*q^*$ is critical for the single-bidder case and justifies its use in the Hedge mechanism.

\begin{thm} \label{thm:lb}
There exists a regular (or m.h.r) distribution such that no mechanism yields a universal approximation with ratio larger than than $1/2$ (or $e^{-1/e} \approx0.6922$) for a single-bidder single-item auction, respectively.
\end{thm}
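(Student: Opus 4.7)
The plan is to reduce any truthful individually rational mechanism in this setting to a posted-price offer at some price $p$, equivalently a sale quantile $q=1-F(p)$, and then test it against two extreme utility functions. I would use $u_1(x)=x$ (risk-neutral) and $u_\epsilon(x)=\min(x,\epsilon)$ for small $\epsilon>0$. Under $u_1$, the achieved/optimal ratio is exactly $R(q)/R(q^*)$. Under $u_\epsilon$, the optimal utility tends to $\epsilon$ as $\epsilon\downarrow 0$ (offer $p'=\epsilon$ and use $F(\epsilon)\to 0$ from smoothness), while the mechanism earns at most $\epsilon q$; hence the $u_\epsilon$-ratio is at most $q\,(1+o_\epsilon(1))$. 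So the universal approximation ratio of any mechanism on $F$ is bounded by $\sup_{q\in(0,1)}\min\!\bigl(R(q)/R(q^*),\,q\bigr)$, and the proof reduces to showing this supremum is at most $1/2$ (respectively $e^{-1/e}$) for the chosen $F$.

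For the regular case I would take the \emph{left-triangle} distribution, whose revenue function $R$ rises linearly from $R(0)=0$ to $R(q^*)=1$ and falls linearly to $R(1)=0$, with $q^*$ a small positive parameter. The underlying $F$ has a near-atom at the top of its support together with an equal-revenue tail; rounding the corner at $q^*$ and smearing the atom by a negligible amount yields a smooth distribution with strictly concave revenue function, hence a valid regular distribution. On $[0,q^*]$ we have $R(q)/R(q^*)=q/q^*\ge q$, so the minimum equals $q\le q^*$. On $[q^*,1]$ the minimum equals $\min\bigl((1-q)/(1-q^*),\,q\bigr)$, which is maximized at the crossing point $q=1/(2-q^*)$ with value $1/(2-q^*)$. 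Sending $q^*\to 0$ (and the smoothing correspondingly) drives the supremum to $1/2$.

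For the m.h.r. case I would use the exponential distribution $F(v)=1-e^{-v}$, which is smooth, regular and m.h.r.\ with $q^*=1/e$ and $R(q^*)=1/e$. At quantile $q$ the two test ratios become $R(q)/R(q^*)=-eq\ln q$ and $q$. A direct computation shows that these curves cross at $q=e^{-1/e}$; on $(0,e^{-1/e}]$ the minimum equals $q$ and is maximized at the crossing with value $e^{-1/e}$, while on $[e^{-1/e},1)$ the minimum equals $-eq\ln q$, which is decreasing in $q$ on this interval, so again maximized at the crossing. Therefore $\sup_q\min\bigl(R(q)/R(q^*),\,q\bigr)=e^{-1/e}$ exactly, matching the Hedge bound of Theorem~\ref{thm:unlimited}.

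The main obstacle is handling the regularity smoothing in the left-triangle case: the idealized triangle has an atom and a piecewise-linear $R$, so I must verify that rounding the peak of $R$ and spreading the atom into a tiny smooth density perturbs $\sup_q\min(R(q)/R(q^*),q)$ by at most $o(1)$ as the smoothing shrinks. This then delivers, for every $\rho>1/2$, a regular distribution on which no mechanism achieves universal $\rho$-approximation. The exponential case is clean and requires no limiting argument.
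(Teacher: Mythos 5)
Your proposal is correct and follows essentially the same route as the paper: reduce a single-bidder truthful mechanism to a posted price, test it against the risk-neutral and capped ($\min(x,\epsilon)$) utilities, and instantiate the resulting bound $\sup_q \min\bigl(R(q)/R(q^*),\,q\bigr)$ on the left-triangle distribution (regular case) and the exponential distribution (m.h.r.\ case), exactly as the paper does, with your explicit max-min computation playing the role of the paper's argument that the optimum sits at the quantile of the price $p^*q^*$. The limiting/smoothing caveat you flag for the left-triangle family is the same looseness present in the paper's own proof, so it does not constitute a gap relative to the paper.
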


\begin{proof}

Consider a single-item single-bidder auction. Consider two possible seller utility functions, $u_{\text{risk-neutral}}$ and $u_{\text{risk-averse}}$, as defined in the introduction. The optimal utility w.r.t.\ $u_{\text{risk-neutral}}$
is $p^{*}q^{*}$, achieved at price $p^{*}$, and the optimal utility
w.r.t.\ $u_{\text{risk-averse}}$ is roughly $\epsilon$ (as $\epsilon \rightarrow 0$), achieved at price $\epsilon$.

We argue that the sale probability $q = 1-F(p^{*}q^{*})$ at the price $p^{*}q^{*}$ is an upper-bound on the best universal approximation possible. The expected revenue at price $p^{*}q^{*}$ is $q p^{*}q^{*}$. So, the approximation ratio for the risk-neutral seller is precisely $q$. The expected utility for the risk-averse seller at price $p^{*}q^{*}$ is roughly $\epsilon q$. So, the approximation ratio for this seller is also $q$. 
Now suppose a price lower than $p^* q^*$ is offered. Then the expected revenue deteriorates, and the approximation ratio for the risk-neutral seller drops below $q$. On the other hand, suppose a price higher than $p^* q^*$ is offered. Then the sale probability drops below $q$, and so does the approximation ratio for the risk-averse seller.

Then it suffices to show that there is regular distribution with sale probability 1/2 at price $p^*q^*$, and there is an m.h.r.\ distribution with sale probability $e^{-1/e}$ at price $p^* q^*$.
First we define the \emph{left-triangle distribution} via its revenue function $R_L(\cdot)$ as follows. Let $R_L(0)=R_L(1)=0$, $R_L(\epsilon)=1$ for some small $\epsilon>0$, and let $R_L(q)$ be piecewise linear between these points, and smoothen it by a negligible amount to make sure that the corresponding $F$ is a valid distribution. (It is essentially a shifted Pareto distribution.)	So $p^*q^*$ is $1$, and clearly the sale probability at price $1$ is roughly $1/2$.

Second, consider the exponential distribution $F(p)=1-e^{-p}$, which satisfies the monotone hazard rate condition. Note that $p^*=1$ and $q^*=1/e$, and it follows that $1-F(p^*q^*)=e^{-1/e}$.
\end{proof}

\begin{rem}
Our bounds in Theorem~\ref{thm:lb} and Theorem~\ref{thm:unlimited} are worst-case over the number of bidders $n$, and the mechanism we propose does not require knowledge of $n$. In general, the knowledge of $n$ is useful: As $n$ increases it makes sense to increase the price from the heavily discounted price $p^*q^*$ towards the optimal risk-neutral price $p^*$, because for large $n$, the resulting revenue as a random variable is well concentrated.
\end{rem}

\subsection{\label{sub:limited}The Limited Supply Case}

In this section we identify an SPM that yields a universal $1/8$-approximation for limited supply auctions. In this case, we have $k$ items to sell, where $k$ can be less than the number of bidders $n$, and this allocation constraint imposes an additional challenge: using the posted price identified in the previous section will cause us to hit the supply constraint without having collected enough revenue. To define the price to use in our posted-price mechanism in this context, we apply a trick introduced in \cite{CHMS10} as follows. Given a mechanism that honors the supply constraint, for a fixed bidder, define the allocation probability $q$ to be the probability that she wins in running this mechanism, where the randomization is over all valuation profiles. As the valuations are identically distributed, $q$ is identical for all bidders. The posted price to use is then $p=F^{-1}(1-q)$. The key for us is then to find the right mechanism to draw the allocation probability from. Recall that the optimal risk-neutral mechanism is the VCG mechanism with reserve $p^*$. In order to have better control over the distribution of the revenue of the mechanism, we derive the allocation probability from the VCG mechanism with a discounted reserve $p^*q^*$. By Lemma \ref{lem:regular}, at least half of the bidders meet the reserve in expectation, and as we will show it follows that the allocation probability $q$ is bounded between $\frac{k}{2n}$ and $\frac{k}{n}$. Moreover, the loss in expected revenue due to this sub-optimal reserve is bounded. We formalize these in the following two claims.

\begin{lem} \label{lem:vcg-discount}
$Rev(VCG_{r=p^* q^*})\geq 0.5\cdot Rev(VCG_{p^{*}})$.
\end{lem}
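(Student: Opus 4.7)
My plan is to compare the two mechanisms by decomposing their revenues using Myerson's characterization $E[Rev(M)] = E[\sum_i \phi(v_i)\, x_i(\mathbf{v})]$ (valid for any truthful mechanism, in particular for VCG with any reserve). Let $\tilde x_i$ and $x_i'$ denote the allocation indicators of $VCG_{p^*}$ and $VCG_{p^*q^*}$ respectively. Because lowering the reserve can only include additional winners, $x_i' \geq \tilde x_i$ pointwise, and in fact $x_i' - \tilde x_i = \mathbf{1}[v_i \in [p^*q^*, p^*)\text{ and } i \in \text{top }k]$. Consequently the loss $L := E[Rev(VCG_{p^*})] - E[Rev(VCG_{p^*q^*})]$ equals $E[\sum_i |\phi(v_i)|(x_i' - \tilde x_i)]$, since $\phi(v) \leq 0$ for $v \leq p^*$ by regularity. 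The goal is to show $L \leq \tfrac{1}{2} E[Rev(VCG_{p^*})]$.

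Next I would invoke i.i.d.\ symmetry: write $g(v) := \Pr[\text{bidder }1 \in \text{top } k \mid v_1 = v]$, which is nondecreasing in $v$. Then
\begin{equation*}
L = n \int_{p^*q^*}^{p^*} |\phi(v)|\, g(v)\, f(v)\, dv, \qquad E[Rev(VCG_{p^*})] = n \int_{p^*}^{\infty} \phi(v)\, g(v)\, f(v)\, dv.
\end{equation*}
Using the integral identity $\int_a^b \phi(v) f(v) \,dv = a(1-F(a)) - b(1-F(b))$, which follows from integration by parts on $\int v f(v)\, dv$ together with $v(1-F(v)) \to 0$ (a consequence of regularity plus $R(q) \to 0$ as $q \to 0$), I get $\int_{p^*q^*}^{p^*} |\phi(v)| f(v)\,dv = p^*q^*(1-q)$ where $q = 1-F(p^*q^*)$, and $\int_{p^*}^{\infty} \phi(v) f(v)\,dv = p^*q^*$.

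The final move is to insert the monotonicity of $g$. On $[p^*q^*, p^*]$, $g(v) \leq g(p^*)$, so $L \leq n\,g(p^*)\, p^*q^*(1-q)$. On $[p^*,\infty)$, $g(v) \geq g(p^*)$ and the integrand $\phi(v) f(v)$ is nonnegative, so $E[Rev(VCG_{p^*})] \geq n\, g(p^*)\, p^*q^*$. Dividing,
\begin{equation*}
\frac{L}{E[Rev(VCG_{p^*})]} \;\leq\; 1 - q \;\leq\; \tfrac{1}{2},
\end{equation*}
where the last inequality is exactly Lemma~\ref{lem:regular}. Rearranging gives the claim.

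The main obstacle I anticipate is the potential discrepancy between the two Myerson integrals: a priori the $g$-weighting on $[p^*q^*, p^*)$ could dwarf the $g$-weighting on $[p^*, \infty)$, in which case the loss would not be dominated. The key observation that makes the argument work is that $g$ is monotone, so the same factor $g(p^*)$ simultaneously upper-bounds $g$ on the ``loss'' region and lower-bounds $g$ on the ``gain'' region, and after it cancels we are left with the purely distributional quantity $1-q$, which Lemma~\ref{lem:regular} already controls.
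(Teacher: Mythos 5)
Your proof is correct, but it follows a genuinely different route from the paper's. The paper argues per bidder and pointwise over the others' bids: fixing $\mathbf{b}_{-i}$, bidder $i$'s conditional expected payment is $\hat{R}(\max\{t,r\})$ in $VCG_{r}$ versus $\hat{R}(\max\{t,p^{*}\})$ in $VCG_{p^{*}}$ (where $t$ is the reserve-free VCG threshold and $\hat{R}(p)=p(1-F(p))$); the case $t\geq p^{*}$ gives equality, and the case $t<p^{*}$ gives a factor $1/2$ because $\hat{R}$ is nondecreasing on $[0,p^{*}]$ by regularity and $\hat{R}(p^{*}q^{*})\geq p^{*}q^{*}/2$ by Lemma~\ref{lem:regular}; integrating over $\mathbf{b}_{-i}$ and $i$ finishes. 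You instead do a global virtual-surplus accounting: the revenue lost by lowering the reserve is exactly the negative virtual value collected on $[p^{*}q^{*},p^{*})$, weighted by the win probability $g$, and the monotonicity of $g$ lets the common factor $g(p^{*})$ cancel, leaving the purely distributional ratio $1-q\leq 1/2$, again by Lemma~\ref{lem:regular}. Both arguments use Lemma~\ref{lem:regular} as the essential distributional input, but they trade off differently: the paper's comparison is more elementary (no virtual values) and is in fact stronger, since the factor $1/2$ holds conditionally on $\mathbf{b}_{-i}$ for each bidder separately; your aggregate argument gives a clean interpretation of where the constant comes from (the virtual surplus surrendered between the two reserves is at most a $(1-q)$-fraction of the retained surplus), but it needs additional ingredients that the paper's proof avoids: Myerson's revenue identity for multi-unit truthful mechanisms, $\phi(p^{*})=0$ (an interior monopoly price), the tail condition $v(1-F(v))\to 0$ (which the paper does assume in Section~\ref{sec:dist}), and the monotonicity of $g$. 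These are all legitimately available here, so your proof stands as a valid alternative.
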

\begin{proof}

For notational convenience, let $\hat{R}(p)=p(1-F(p))$.
Fix a bidder $i$, fix the bids $\mathbf{b}_{-i}$ of the other bidders, and let $t$ be the
threshold induced by the $VCG$ mechanism (with no reserve) for bidder
$i$. Then the threshold bids of bidder $i$ in $VCG_{p^{*}}$ and
$VCG_{r}$ (with $r=p^*q^*$) are $\max\{t,p^{*}\}$ and $\max\{t,r\}$ respectively.
It suffices to show that the expected revenue of bidder $i$ in $VCG_{r}$,
which is $\hat{R}(\max\{t,r\})$, is at least half of that in $VCG_{p^{*}}$,
which is $\hat{R}(\max\{t,p^{*}\})$, and our claim follows by integrating over all $\mathbf{b}_{-i}$ and $i$.

There are two cases. If $t\geq p^{*}$, then $t\geq p^{*} q^* = r$, and so the offered prices and the expected revenues from the two auctions are identical.
Otherwise, $t < p^{*}$, so bidder $i$ is offered $p^{*}$ (with revenue $p^*q^*$) by $VCG_{p^{*}}$, and a price in the interval $[p^*q^*, p^{*}]$ by $VCG_{r}$.
As revenue is monotonically decreasing as price goes down from $p^*$ to $0$,
the revenue of $VCG_{r}$ is minimized when the offer price is $p^*q^*$. By Lemma~\ref{lem:regular} the resulting revenue $p^*q^* \left(1-F(p^*q^*)\right)$ is at least $\frac{p^*q^*}{2}$; integrating over all $\mathbf{b}_{-i}$ and $i$ completes the proof.

\end{proof}

\begin{lem}
\label{cla:q_hat}Let $q$ be the allocation probability of any fixed bidder. Then $q$ lies in the interval $[\frac{k}{2n}, \frac{k}{n}]$.
\end{lem}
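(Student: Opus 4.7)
The plan is to reduce the claim to a simple statement about the number of bidders that clear the reserve and then invoke Lemma~\ref{lem:regular} through a coupling argument. Let $r=p^{*}q^{*}$ and let $Y=\#\{i : v_i \geq r\}$ be the (random) number of bidders meeting the reserve. In $VCG_{r}$, the winners are precisely those bidders who are among the top $k$ by bid \emph{and} whose bid is at least $r$; a straightforward case analysis ($Y<k$ vs.\ $Y\geq k$) shows that the total number of winners always equals $W:=\min(Y,k)$. Since valuations are i.i.d.\ and $VCG_{r}$ treats bidders symmetrically, every bidder wins with the same probability, so $q = E[W]/n = E[\min(Y,k)]/n$.

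The upper bound $q\leq k/n$ is then immediate, because $\min(Y,k)\leq k$ with probability one.

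For the lower bound, it suffices to show that $E[\min(Y,k)]\geq k/2$. Here is where I would use a coupling. Fix any $k$ of the $n$ bidders and let $Y' = \sum_{i=1}^{k}\mathbf{1}[v_i \geq r]$, so $Y'$ is a $\mathrm{Bin}(k, 1-F(r))$ random variable. Since the indicators summed to form $Y'$ form a subset of those forming $Y$, we have $Y'\leq Y$, and of course $Y'\leq k$. Together these two inequalities give $\min(Y,k)\geq Y'$ pointwise, and hence $E[\min(Y,k)]\geq E[Y'] = k\cdot(1-F(p^{*}q^{*}))$. Applying Lemma~\ref{lem:regular} yields $E[Y']\geq k/2$ and the desired bound $q\geq k/(2n)$.

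The only nonobvious step is choosing the right coupling: working directly with the full binomial $Y\sim \mathrm{Bin}(n,1-F(r))$ and trying to lower bound $E[\min(Y,k)]$ via Jensen's inequality or concentration would be awkward, since truncating the sum at $k$ destroys the ability to invoke $E[Y]\geq n/2$ cleanly. Restricting to any $k$-subset of the bidders sidesteps this because the truncation at $k$ becomes automatic, which is exactly why the naive half of Lemma~\ref{lem:regular} suffices to finish the argument.
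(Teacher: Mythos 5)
Your proposal is correct and follows essentially the same route as the paper: both reduce the claim to $qn = E[\min(Y,k)]$, get the upper bound trivially from $\min(Y,k)\leq k$, and get the lower bound $E[\min(Y,k)]\geq k/2$ from Lemma~\ref{lem:regular}. The only difference is cosmetic: you lower-bound $\min(Y,k)$ pointwise by the count $Y'$ of successes among a fixed $k$-subset of bidders, whereas the paper uses the pointwise inequality $\min(Y,k)\geq \frac{k}{n}Y$ (valid since $Y\leq n$ and $k\leq n$) and then $E[Y]\geq n/2$; both yield exactly the same bound.
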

\begin{proof}
Let $X$ be the number of bidders with bids at least
$r$. The expected number of winners of $VCG_{r}$ is $\min(k,X)$.
By definition of $q$, $qn$ is the expected number of winners
in $VCG_{r}$. So, $qn=E[\min(k,X)]$ and hence, $q \leq k/n$.

By definition of $r$, each bidder's
bid is at least $r$ with probability at least $0.5$, and so,
$E[X]\geq 0.5 n$. Therefore $qn=E[\min(k,X)]\geq E[\frac{k}{n}\cdot X]=\frac{k}{n}0.5 n=0.5 k$.
\end{proof}

Now we can define our $Hedge$ mechanism (for the limited-supply case). The hedge mechanism is an SPM which makes a take-it-or-leave-it offer at price $p=F^{-1}(1-q)$ to bidders one by one, as long as the supply lasts.

\begin{thm}
\label{thm:limited-regular-1}In a multi-unit auction with $k$ items
and $n$ bidders, where bidders' valuations are drawn i.i.d.\ from
a regular distribution $F$, the $Hedge$ mechanism is a universal $1/8$-approximation to optimal expected utility. \end{thm}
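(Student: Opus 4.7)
The plan is to express Hedge's revenue as $Y\cdot p$ where $Y=\min(k,X)$ and $X\sim\mathrm{Binomial}(n,q)$ counts the bidders who would accept the posted price, and then combine three ingredients: (i) a pointwise concavity bound $u(Yp)\geq (Y/k)\,u(kp)$; (ii) a Binomial lower bound $E[Y]\geq k/4$; and (iii) a ``revenue cap'' bound $E[Rev(Mye)]\leq 2kp$. Together these yield $E[u(Rev(Hedge))]\geq (1/4)(1/2)\,u(E[Rev(Mye)])=(1/8)\,u(E[Rev(Mye)])$, and Fact~\ref{fact:upperbound} then converts this to the claimed universal $1/8$-approximation.

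For (i), since $Yp\in[0,kp]$ and $u$ is concave with $u(0)=0$, the standard sub-linearity $u(\lambda x)\geq \lambda u(x)$ for $\lambda\in[0,1]$ gives $u(Yp)\geq (Y/k)\,u(kp)$ pointwise; taking expectations yields $E[u(Yp)]\geq (E[Y]/k)\,u(kp)$. For (ii), I would use the elementary pointwise inequality $\min(k,X)\geq X-X^2/(4k)$, verified by case analysis on $X\leq k$ versus $X\geq k$ (the latter case reducing to $(X-2k)^2\geq 0$). Plugging $E[X^2]=nq(1-q)+(nq)^2$ into the resulting expectation and using $nq\in[k/2,k]$ from Lemma~\ref{cla:q_hat} yields $E[Y]\geq nq/2\geq k/4$.

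For (iii), the route is $E[Rev(Mye)]\leq 2E[Rev(VCG_r)]\leq 2npq\leq 2kp$, where the first step is Lemma~\ref{lem:vcg-discount} and the last uses $nq\leq k$. The middle step is the main technical content. For each bidder $i$, the expected payment in $VCG_r$ equals $E[\hat R(T_i)]$, where $T_i=\max(r,v^{-i}_{(k)})$ is the threshold induced by the other bids and $\hat R(t)=t(1-F(t))$. Writing $\hat R(t)=R(1-F(t))$ with $R$ the revenue curve (concave by regularity), Jensen gives $E[R(1-F(T_i))]\leq R(E[1-F(T_i)])$. The crucial observation is that $E[1-F(T_i)]=\Pr[v_i\geq T_i]$ is bidder $i$'s winning probability in $VCG_r$, which equals $q$ by the very definition of $q$. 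Hence the per-bidder expected payment is at most $R(q)=pq$, and summing over $n$ bidders gives the bound. Finally, $u(kp)\geq u(E[Rev(Mye)]/2)\geq u(E[Rev(Mye)])/2$ by monotonicity and sub-linearity of $u$.

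The main obstacle is the per-bidder Jensen step in (iii): recognizing that $\hat R$ is a concave function of the quantile $1-F(T_i)$ whose expectation is precisely the allocation probability $q$ that defined the posted price is the critical insight, and this matching between the definition of $p$ and the average of the random quantile is exactly what makes the proof go through.
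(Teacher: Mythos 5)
Your proposal is correct, and it follows the same overall skeleton as the paper's proof: write Hedge's revenue as $p\cdot\min(k,X)$ with $X$ binomial, use concavity of $u$ (with $u(0)=0$) to pull the utility out at a deterministic revenue level, bound the capped binomial expectation by a constant, bound $E[Rev(VCG_r)]$ by $npq$, invoke Lemma~\ref{lem:vcg-discount} together with the identification of Myerson's mechanism with $VCG_{p^*}$ in this i.i.d.\ regular multi-unit setting, and finish with Fact~\ref{fact:upperbound}; the constants combine identically as $\frac14\cdot\frac12=\frac18$. There are two genuine differences worth noting. First, the paper caps the binomial at $qn$ and proves $E[\min(X,qn)]\geq qn/4$ by citing the median-of-binomial result of \cite{KB-80} (with a separate $k=1$ case), whereas you cap at $k$ and derive $E[\min(k,X)]\geq nq/2\geq k/4$ from the pointwise bound $\min(k,X)\geq X-X^2/(4k)$ and a second-moment calculation; your route is more elementary and self-contained (no external median fact, no case split), at the cost of a slightly different intermediate quantity, and it rearranges the chain so that the factor $2$ appears as a revenue cap $E[Rev(Mye)]\leq 2kp$ rather than inside $u(E[Rev(VCG_r)])$. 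Second, your Jensen-in-quantile argument --- writing each bidder's expected payment in $VCG_r$ as $E[R(1-F(T_i))]$, using concavity of the revenue curve, and observing that $E[1-F(T_i)]$ is exactly the allocation probability $q$ that defines the posted price --- is a rigorous elaboration of the step the paper justifies in one sentence (``the optimal way to maximize expected revenue subject to winning with probability $q$ is a single price''), so it adds detail rather than changing the argument.
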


Notice that the revenue of $Hedge$ is $p\cdot\min(Y,k)$, where
$Y$ is the number of bidders who bid at least $p$, which is a binomial variable with parameter $(n,q)$. Hence
$E[Y]=qn\geq 0.5 k$. Crucial to our analysis is the following property about ``capped'' binomial variables:

\begin{lem}
Let $Y$ be a binomial random variable with parameter $(n,q)$ where
 $qn\geq 0.5k$ for some positive integer $k$, then $E[\min(Y,qn)]\geq0.25\cdot qn$.
\end{lem}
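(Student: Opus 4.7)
The plan is to rewrite the min as a tail deficit and bound that deficit via an elementary variance estimate. Let $\mu := qn$. The identity $\min(Y,\mu) = Y - (Y-\mu)_+$ (trivial to verify on the two cases $Y\leq\mu$ and $Y>\mu$), together with $E[Y]=\mu$, yields
$$ E[\min(Y,\mu)] \;=\; \mu - E[(Y-\mu)_+]. $$
Hence the target inequality $E[\min(Y,\mu)]\geq \mu/4$ is equivalent to the tail bound $E[(Y-\mu)_+] \leq 3\mu/4$, and the rest of the argument would be aimed at this.

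Next, because $E[Y-\mu]=0$, the positive and negative parts of $Y-\mu$ have equal expectation, so $E[(Y-\mu)_+] = \tfrac{1}{2}E[|Y-\mu|]$. Jensen's inequality applied to $x\mapsto x^2$ (equivalently Cauchy--Schwarz) gives $E[|Y-\mu|] \leq \sqrt{\mathrm{Var}(Y)}$, and since $Y$ is Binomial one has $\mathrm{Var}(Y)=nq(1-q)\leq nq = \mu$. Combining the two estimates produces
$$ E[(Y-\mu)_+] \;\leq\; \sqrt{\mu}/2. $$
I would then check that $\sqrt{\mu}/2 \leq 3\mu/4$ is equivalent to $\mu \geq 4/9$. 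Since $k\geq 1$ is a positive integer, the hypothesis $qn\geq 0.5\,k$ forces $\mu\geq 1/2 > 4/9$, and combining everything gives $E[\min(Y,\mu)] \geq \mu - 3\mu/4 = \mu/4$, which is the claim.

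The only delicate point is quantitative: the ``variance at most mean'' estimate for the Binomial is just strong enough here, and the inequality $\sqrt{\mu}/2 \leq 3\mu/4$ would fail for $\mu < 4/9$. So the proof genuinely uses the hypothesis $k\geq 1$ (and thus $qn\geq 1/2$), with essentially no slack, and any weakening of the ``$0.5k$'' constant in the lemma would require a sharper tail bound or a separate treatment of small $\mu$. Everything else---the min identity, the symmetrization from $E[Y-\mu]=0$, and Jensen---is entirely routine.
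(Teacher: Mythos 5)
Your proof is correct, and it takes a genuinely different route from the paper's. The paper splits into two cases: for $k=1$ it computes $E[\min(Y,qn)]=qn\cdot\Pr[Y>0]$ and bounds $\Pr[Y>0]\geq 1-e^{-1/2}$, while for $k>1$ it invokes the Kaas--Buhrman result that a binomial median lies in $\{\lfloor qn\rfloor,\lceil qn\rceil\}$, giving $E[\min(Y,qn)]\geq 0.5\lfloor qn\rfloor\geq 0.25\,qn$. You instead use the single identity $E[\min(Y,\mu)]=\mu-E[(Y-\mu)_+]$ with $\mu=qn$, symmetrize via $E[(Y-\mu)_+]=\tfrac12 E|Y-\mu|$, and apply the $L^1$--$L^2$ bound together with $\mathrm{Var}(Y)=nq(1-q)\leq\mu$ to get $E[\min(Y,\mu)]\geq\mu-\sqrt{\mu}/2$, which exceeds $\mu/4$ exactly when $\mu\geq 4/9$; the hypothesis $qn\geq 0.5k\geq 0.5$ supplies this. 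What your approach buys: it is self-contained (no external median theorem), requires no case split on $k$, and in fact only uses that $Y$ has mean $\mu\geq 4/9$ and variance at most its mean, so it applies well beyond binomials; for large $\mu$ your bound $\mu-\sqrt{\mu}/2$ is also much stronger than $0.25\mu$. What the paper's argument buys is a slightly better constant in the small-$\mu$ regime ($1-e^{-1/2}\approx 0.39$ versus your $\approx 0.29$ at $\mu=1/2$), though both comfortably clear the stated $0.25$. As you note, your use of the hypothesis is nearly tight, so any weakening of the constant $0.5$ in the hypothesis would force a sharper tail estimate in your framework, whereas the paper's $k=1$ case degrades more gracefully.
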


\begin{proof}
Clearly $E[Y]=qn\geq 0.5 k$.

First let $k=1$, and hence $ 0.5 \leq qn\leq1$. Note that $E[\min(Y,qn)]/qn=Pr[Y>0]=1-(1-q)^{n}$,
which is at least $1-(1- 0.5 /n)^{n}\geq 1-e^{-0.5}>0.25$.

Next let $k>1$, and hence $qn\geq 0.5  k\geq 1 $. By a result of \cite{KB-80}, one
of $\lceil qn\rceil,\lfloor qn\rfloor$ is the median of $Y$, and
hence $Pr[Y\geq\lfloor qn\rfloor]\geq0.5$. It follows that $E[\min(Y,qn)]\geq Pr[Y\geq\lfloor qn\rfloor]\cdot\lfloor qn\rfloor\geq0.5\cdot\lfloor qn\rfloor \geq0.25qn$, and our claim follows.
\end{proof}

We now complete the proof of Theorem~\ref{thm:limited-regular-1}.

\begin{proof} (of Theorem~\ref{thm:limited-regular-1})
The expected utility of $Hedge$:
\begin{eqnarray*}
E_{\mathbf{v}}[u(p\cdot\min(Y,k))] &\geq & E_{\mathbf{v}}[u(p\cdot\min(Y,qn))] \\
&\geq& E_{\mathbf{v}}[u(pqn)\cdot\frac{\min(Y,qn)}{qn}]\\
&\geq& 1/4 \cdot u(pqn) \\
&\geq& 1/4 \cdot u(Rev(VCG_{r})) \\
&\geq& 1/8 \cdot u(Rev(VCG_{p^{*}}))\\
&\geq& 1/8 \cdot \mbox{Optimal Expected Utility} \\
\end{eqnarray*}

The second step is by concavity of $u$, the fourth step is by monotonicity of the utility function with the following additional justification. For any bidder $i$, she wins with probability $q$ in $VCG_{r}$. On the other hand, the optimal way to maximize expected revenue
subject to the constraint that she wins with probability $q$
is to set a single price $p$ and get expected revenue $qp$.  The fifth step is by Lemma~\ref{lem:vcg-discount}. Applying Fact~\ref{fact:upperbound} completes the proof.
\end{proof}

We do not have an analog of Theorem~\ref{thm:lb} for the limited supply case. We do not know if our analysis is tight (though we can tweak various parameters to improve the ratio slightly) or if it possible to identify a better posted-price mechanism.

\section{The VCG Mechanism} \label{sec:vcg}

In this section, we quantify the universal approximation ratio of the VCG mechanism in multi-unit auctions. This is useful because the VCG mechanism ($k+1$-st price auction) or a  variation of it with a reserve price is often used in practice.

\subsection{The Single-Item Case}

We first restrict our attention to single-item auctions. The main result of this subsection is that the Vickrey mechanism is a universal $(1-1/n)$-approximation when there are $n$ bidders.

\begin{thm}\label{thm:Vickrey}
For a single item auction with $n$ bidders, when valuations are drawn i.i.d.\ from a regular distribution $F$, the Vickrey mechanism is a universal $(1-1/n)$-approximation to optimal expected utility.
\end{thm}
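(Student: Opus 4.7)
The plan is to leverage the virtual-utility characterization (Lemma~\ref{lem:virtual_utility}) and mimic the Bulow-Klemperer paradigm that Section~\ref{sec:results} flags as one of the authors' techniques: I will show that the Vickrey mechanism on $n$ bidders achieves at least the utility of the utility-optimal mechanism on $n-1$ bidders, and that the latter is in turn within a $(1-1/n)$-factor of the utility-optimal mechanism on $n$ bidders.

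First I will set up the two quantities to be compared. By Lemma~\ref{lem:virtual_utility}, the expected utility of the reserve-free Vickrey mechanism on $n$ bidders equals $E[\phi_F^u(v_{(1,n)})]$, since the highest bidder always wins; and by Lemma~\ref{lem:virtual_utility} combined with the monotonicity of $\phi_F^u$ under regularity and concavity (Lemma~\ref{lem:u-regular}), the utility-optimal truthful mechanism is Vickrey with reserve $r_u^*$ chosen so that $\phi_F^u(r_u^*)=0$, with expected utility $E[\phi_F^u(v_{(1,n)})^+]$. Proving the theorem thus reduces to showing
\[
E[\phi_F^u(v_{(1,n)})] \;\geq\; \tfrac{n-1}{n}\, E[\phi_F^u(v_{(1,n)})^+].
\]

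The argument has two ingredients. For the Bulow-Klemperer analog, $E[\phi_F^u(v_{(1,n)})] \geq E[\phi_F^u(v_{(1,n-1)})^+]$, I will define a mechanism $M'$ on $n$ bidders by running the utility-optimal mechanism on the first $n-1$ bidders; if that mechanism does not allocate (all of those bids fall below $r_u^*$), award the item to bidder $n$ at price $0$. Truthfulness of $M'$ is immediate since bidder $n$'s allocation does not depend on her bid, and the expected utility of $M'$ equals $E[\phi_F^u(v_{(1,n-1)})^+]$ because the free giveaway contributes $0$ to revenue and hence $0$ to utility. But $M'$ is a truthful mechanism on $n$ bidders that always sells, and by Lemma~\ref{lem:virtual_utility} combined with monotonicity of $\phi_F^u$, the reserve-free Vickrey mechanism on $n$ (which allocates to the bidder with the largest $\phi_F^u$-value) maximizes expected utility among always-selling truthful mechanisms, so it dominates $M'$. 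For the second ingredient, a direct integral estimate using $F(v)^{n-2}\geq F(v)^{n-1}$ and $\phi_F^u(v)\geq 0$ on $[r_u^*,\infty)$ gives
\[
\frac{E[\phi_F^u(v_{(1,n-1)})^+]}{E[\phi_F^u(v_{(1,n)})^+]} \;=\; \frac{(n-1)\int_{r_u^*}^\infty \phi_F^u(v) F(v)^{n-2} f(v)\,dv}{n\int_{r_u^*}^\infty \phi_F^u(v) F(v)^{n-1} f(v)\,dv} \;\geq\; \frac{n-1}{n}.
\]
Chaining the two inequalities yields the theorem, and applying Lemma~\ref{lem:virtual_utility} to both sides makes the approximation ratio an honest $U(Vic)/U(M_u^*)$ bound rather than the looser Fact~\ref{fact:upperbound} benchmark.

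The step requiring the most care is the Bulow-Klemperer analog: one must verify that the ``augment with a free giveaway'' construction remains truthful, that the giveaway does not corrupt the utility accounting (it does not, precisely because $u(0)=0$ so both revenue and utility are zero on the giveaway event), and that reserve-free Vickrey is utility-optimal among always-selling truthful mechanisms on $n$ bidders. All three reduce cleanly to Lemma~\ref{lem:virtual_utility} plus the monotonicity of $\phi_F^u$; the rest of the argument is a routine integral comparison.
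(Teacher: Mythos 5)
Your proof is correct, and its skeleton coincides with the paper's: both compare Vickrey on $n$ bidders with the utility-optimal mechanism on $n-1$ bidders via the same augmented mechanism (run the optimal mechanism on $n-1$ bidders and give the item away for free if it is unsold), using Lemma~\ref{lem:virtual_utility} together with the monotonicity of $\phi^u_F$ (Lemma~\ref{lem:u-regular}) to argue that reserve-free Vickrey is utility-optimal among always-selling truthful mechanisms and hence dominates the augmented mechanism. Where you genuinely depart is the second ingredient: the paper invokes the general Lemma~\ref{lem:removing-k-bidders}, which bounds the loss from deleting $k$ bidders by a probabilistic sub-sampling argument that uses only the concavity of $u$ (and therefore also serves the multi-unit Theorem~\ref{thm:vcg}, where Lemma~\ref{lem:virtual_utility} is unavailable), whereas you prove the $k=1$, single-item case directly by writing both optimal expected utilities as integrals of $\phi^u_F$ against the order-statistic densities and comparing $F(v)^{n-2}$ with $F(v)^{n-1}$. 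Your integral route is more explicit and self-contained for this particular theorem, and it makes transparent that the benchmark is the true utility-optimal mechanism (second price with reserve $r^*_u$) rather than the Fact~\ref{fact:upperbound} relaxation -- though the paper's proof of this theorem also compares against the true optimum -- but because it leans on the single-item virtual-utility characterization and the explicit form of the optimal mechanism, it does not extend beyond one item the way the paper's lemma does.
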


This theorem is a generalization of a result of Dughmi et al.~\cite{Dughmi09}, which was for the risk-neutral case. Most of the proof steps are similar, and so we only mention the proof structure, which is also used in the next section. Let $OPT'$ be the mechanism which first runs the utility-optimal mechanism $OPT$ on the $n-1$ bidders, and then allocates the item for free to the other bidder in case it is still available. Our theorem follows from three statements. First, the revenue (and hence utility) of $OPT'$ on $n$ bidders is equal to that of $OPT$ on $n-1$ bidders. Second, among all mechanisms that always sell the item, including Vickrey and $OPT'$, Vickrey maximizes the winner's valuation and hence virtual utility, and therefore by the characterization of Lemma~\ref{lem:virtual_utility}, Vickrey on $n$ bidders has a higher expected utility than that of $OPT'$ on $n-1$ bidders. Third, as we will show more more generally in Lemma~\ref{lem:removing-k-bidders}, the optimal expected utility from $n-1$ bidders is at least $1-1/n$ fraction of that from $n$ bidders. These three statements altogether imply our theorem.

\subsection{The Multi-Unit Case}

In this section we prove a result analogous to Theorem~\ref{thm:Vickrey} for multi-unit auctions.

\begin{thm}
\label{thm:vcg}In a multi-unit auction with $k$ items and $n$ bidders,
where bidders' valuations are drawn i.i.d.\ from a regular distribution
$F$, the VCG mechanism is a universal $(n-k)/4n$-approximation to optimal expected utility.
\end{thm}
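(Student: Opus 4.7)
The plan is to follow the three-step structure of Theorem~\ref{thm:Vickrey}---reduce to $n-k$ bidders, invoke Bulow-Klemperer, then concentrate---but to replace the virtual-utility step with an order-statistic concentration step, since Lemma~\ref{lem:virtual_utility} no longer applies once $k>1$. Writing $U^*_{n,k}$ for the optimal expected utility with $n$ bidders and $k$ items, I would build the chain: (i) $U^*_{n-k,k}\geq \tfrac{n-k}{n}U^*_{n,k}$, the natural $k$-bidder analogue of Lemma~\ref{lem:removing-k-bidders}; (ii) $U^*_{n-k,k}\leq u(E[Rev_{Mye,n-k,k}])$, which is Fact~\ref{fact:upperbound}; (iii) $E[Rev_{Mye,n-k,k}]\leq E[Rev_{VCG,n,k}] = k\cdot E[v_{(k+1)}]$, the multi-unit Bulow-Klemperer bound established by Dughmi et al.~\cite{Dughmi09}; and (iv) a concentration bound $E[u(k\,v_{(k+1)})]\geq \tfrac{1}{4} u(k\, E[v_{(k+1)}])$ for every concave monotone $u$ with $u(0)=0$.

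Given (i)--(iv), the theorem follows mechanically: by (iv) and monotonicity of $u$,
\[ E[u(Rev_{VCG,n,k})] \;\geq\; \tfrac{1}{4}\, u(k\, E[v_{(k+1)}]) \;\geq\; \tfrac{1}{4}\, u(E[Rev_{Mye,n-k,k}]) \;\geq\; \tfrac{1}{4}\, U^*_{n-k,k} \;\geq\; \tfrac{n-k}{4n}\, U^*_{n,k}, \]
using (iii), (ii), and (i) in succession. Step (i) should extend the proof of Lemma~\ref{lem:removing-k-bidders} from removing one bidder to removing $k$ (for instance by iterating the one-bidder bound $k$ times, or by a direct coupling argument on a random subset of size $n-k$); step (ii) is immediate from Fact~\ref{fact:upperbound}; step (iii) is a direct invocation of~\cite{Dughmi09}.

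The main obstacle is (iv). The inequality $E[u(X)]\geq \tfrac{1}{4} u(E[X])$ does \emph{not} hold for an arbitrary nonnegative random variable $X$ (the mean can be arbitrarily larger than the median for heavy-tailed $X$), so the argument must exploit regularity of $F$ to control the upper tail of the $(k+1)$-st order statistic. My plan is to mirror the concentration argument from the proof of Theorem~\ref{thm:limited-regular-1}: the concavity inequality $u(cx)\geq c\,u(x)$ for $c\in[0,1]$ (which uses $u(0)=0$) reduces (iv) to showing $E[\min(v_{(k+1)}, \bar v)]\geq \bar v/4$, where $\bar v = E[v_{(k+1)}]$. The Kaas-Buhrmann median result for the binomial (the workhorse of Section~\ref{sub:limited}) gives $\Pr[v_{(k+1)}\geq F^{-1}(1-(k{+}1)/n)]\geq 1/2$, and I would leverage the concavity of $R_F(q)=q\,F^{-1}(1-q)$ (regularity) to bound $\bar v$ in terms of this quantile $F^{-1}(1-(k{+}1)/n)$, yielding the desired $\bar v/4$ bound. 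The quantile-versus-mean comparison via the concave revenue function is where the delicate analysis of step (iv) sits.
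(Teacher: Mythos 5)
Your high-level chain is exactly the paper's: step (i) is Lemma~\ref{lem:removing-k-bidders} itself (as stated it already removes $k$ bidders at once, and its proof is precisely the random-subset/concavity coupling you sketch), step (ii) is Fact~\ref{fact:upperbound}, and steps (ii)--(iv) together are what the paper packages as Lemma~\ref{lem:bk-utility}; note that (iii) is the original $k$-unit Bulow--Klemperer theorem~\cite{BK96} (no need to route through~\cite{Dughmi09}). So, as you correctly observe, everything hinges on the concentration step (iv), which in the paper is Lemma~\ref{lem:tail}: if $Y$ is the $(k+1)$-st highest of $n$ i.i.d.\ regular draws, then $\Pr[Y\ge E[Y]]\ge 1/4$.

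Your proposed proof of (iv) has a genuine gap: the mean-versus-quantile comparison it needs, $E[v_{(k+1)}]\le 2\,F^{-1}\bigl(1-\tfrac{k+1}{n}\bigr)$, is false for regular distributions. Take the distribution with revenue curve $R(q)=1-q$, i.e.\ $F(v)=v/(1+v)$ (the shifted-Pareto/left-triangle family already used in Theorem~\ref{thm:lb}; it is regular after a negligible perturbation), with $n=3$, $k=1$: a direct computation gives $E[v_{(2)}]=2$ while $F^{-1}(1-2/3)=1/2$, so the mean is \emph{four} times the $(k+1)/n$-quantile price, and your bound $E[\min(Y,\bar v)]\ge \Pr[Y\ge p_0]\cdot\min(p_0,\bar v)$ delivers only $\bar v/8$. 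The bound also fails outright when $k+1=n$ (a case the theorem covers, since $(n-k)/4n>0$ at $k=n-1$): for the uniform distribution $F^{-1}(0)=0$ while $E[v_{(n)}]>0$. The quantile that regularity actually controls is $k/n$, not $(k+1)/n$: the paper's proof of Lemma~\ref{lem:tail} replaces $R$ by its tangent line at $1-F(E[Y])$ (which only makes the tail event less likely), shows that for a linear revenue curve the mean $E[\tilde Y]$ equals the price at quantile exactly $k/n$, and then bounds $\Pr[\mathrm{Bin}(n,k/n)\ge k+1]\ge 1/4$. In the example above this probability is $7/27$, and $\Pr[Y\ge E[Y]/2]$ is exactly $1/2$, so there is no room for your hoped-for structure of ``probability $1/2$ at a quantile within factor $2$ of the mean'' via the $(k+1)/n$ quantile; the $1/4$ must come from the binomial tail at quantile $k/n$, not from a factor-$2$ quantile-mean comparison. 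To complete your argument you essentially need to prove Lemma~\ref{lem:tail} (tangent-line reduction included); with that in hand, the rest of your chain is correct and coincides with the paper's proof.
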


The result implies that as long as the number of bidders is a small multiple of the number of items, the universal approximation ratio of VCG mechanism is close to $1/4$. The proof structure is similar to that of Theorem~\ref{thm:Vickrey}, but the details are different because  Lemma~\ref{lem:virtual_utility} does not extend to the multi-unit case (as discussed in Section~\ref{sec:benchmark}). Recall that the revenue of the VCG mechanism is exactly $k$ times the $k+1$-st highest bid (let the $n+1$-th highest bid be 0). The following probability bound on the $k+1$-st highest bid is crucial to our analysis.

\begin{lem}
\label{lem:tail}
For any regular distribution $F$, and $1<t\leq n$, let $Y$ be the $t$-th largest of $n$ i.i.d.\ random draws from $F$, then $Pr[Y\geq E[Y]]\geq 1/4$.
\end{lem}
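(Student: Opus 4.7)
The plan is to change coordinates to a Beta distribution, use concavity of the revenue function to pin down a critical quantile, and then finish with a binomial tail estimate.

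\textbf{Reduction.} Let $V(q) := F^{-1}(1-q)$, so that $Y = V(Q)$ where $Q := Q_{(t)} \sim \mathrm{Beta}(t, n-t+1)$ is the $t$-th smallest of $n$ i.i.d.\ uniforms. Since $V$ is strictly decreasing there is a unique $q^{*}\in(0,1)$ with $V(q^{*}) = E[Y]$, and
\[
\Pr[Y \ge E[Y]] \;=\; \Pr[Q \le q^{*}] \;=\; \Pr[\mathrm{Bin}(n, q^{*}) \ge t].
\]

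\textbf{Quantile bound from concavity.} Regularity makes the revenue function $R(q) := q V(q)$ strictly concave. The tangent inequality $R(q) \le R(q^{*}) + R'(q^{*})(q - q^{*})$, divided by $q > 0$, reads $V(q) \le \alpha/q + \beta$, where $\alpha := R(q^{*}) - R'(q^{*}) q^{*}$ and $\beta := R'(q^{*})$. Setting $q = 0$ and using $R(0)=0$ shows $\alpha \ge 0$, with $\alpha > 0$ under strict concavity. Taking expectations and using the standard identity $E[1/Q_{(t)}] = n/(t-1)$ (valid because $t \ge 2$) gives $V(q^{*}) = E[V(Q)] \le \alpha n/(t-1) + \beta$. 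Since also $V(q^{*}) = \alpha/q^{*} + \beta$ by construction, cancelling $\beta$ and dividing by $\alpha > 0$ yields $q^{*} \ge (t-1)/n$.

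\textbf{Finish, and the hard part.} By stochastic monotonicity in the Bernoulli parameter, $\Pr[\mathrm{Bin}(n, q^{*}) \ge t] \ge \Pr[\mathrm{Bin}(n, (t-1)/n) \ge t]$. The binomial on the right has integer mean $t-1$, so the Kaas-Buhrmann result already used earlier in the paper places its median at exactly $t-1$. The main obstacle is showing that the mass strictly above the median is at least $1/4$: Kaas-Buhrmann alone gives only $\Pr[\mathrm{Bin} \ge t] \ge 1/2 - \Pr[\mathrm{Bin} = t-1]$, which is vacuous when the mode mass reaches $1/2$ (as it does at $n = t = 2$). I would combine this with the lower bound $\Pr[\mathrm{Bin} \ge t] \ge \Pr[\mathrm{Bin} = t] = \tfrac{t-1}{t}\Pr[\mathrm{Bin} = t-1]$ obtained from the mode recursion $\Pr[=t]/\Pr[=t-1] = (t-1)/t$, and verify that the extremal case is the corner $n = t = 2$, at which $\Pr[\mathrm{Bin}(2, 1/2) = 2] = 1/4$ exactly, and the probability is strictly larger in every other case.
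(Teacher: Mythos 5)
Your reduction and quantile bound are correct and are essentially the paper's own argument in streamlined form: your tangent bound $V(q)\le \alpha/q+\beta$ is the paper's auxiliary distribution $\tilde F$ whose revenue curve is the tangent line $\tilde R(q)=aq+b$ at the quantile of $E[Y]$, and your identity $E[1/Q_{(t)}]=n/(t-1)$ is the paper's identity $\tfrac{1}{q}f_{t,n}(q)=\tfrac{n}{t-1}f_{t-1,n-1}(q)$; both routes arrive at $\Pr[Y\ge E[Y]]\ \ge\ \Pr[\mathrm{Bin}(n,(t-1)/n)\ge t]$. Avoiding the auxiliary random variable $\tilde Y$ is a mild simplification, not a different method.

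The gap is in the finish, which you yourself flag as the hard part: the two estimates you propose to combine cannot certify the constant $1/4$. Writing $m=\Pr[B=t-1]$ for $B\sim\mathrm{Bin}(n,(t-1)/n)$, your bounds give $\Pr[B\ge t]\ \ge\ \max\{1/2-m,\ \tfrac{t-1}{t}m\}$, and the worst case of the right-hand side over $m$ is $\tfrac{t-1}{2(2t-1)}$, which is only $1/6$ at $t=2$. This shortfall is realized: for $t=2$ and large $n$ one has $m=(1-1/n)^{n-1}\to e^{-1}\approx 0.368$, so the median bound gives roughly $0.132$ and the mode-recursion bound roughly $0.184$, both strictly below $1/4$, even though the true probability tends to $1-2/e\approx 0.264$. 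So the statement ``verify that the extremal case is the corner $n=t=2$'' is carrying the entire load and is not a consequence of the two inequalities you cite; it needs a separate argument (e.g., monotonicity of $\Pr[\mathrm{Bin}(n,(t-1)/n)\ge t]$ in $n$ for fixed $t$ together with small-case and Poisson-limit estimates, or some sharper binomial inequality). In fairness, the paper is equally terse at this point—it asserts the binomial fact ``by properties of binomial distribution'' with equality at $t=n=2$—but since your write-up explicitly rests the last step on Kaas--Buhrmann plus the ratio $\Pr[B=t]/\Pr[B=t-1]=(t-1)/t$, the proof as proposed does not close.
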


\begin{proof}
Our proof consists of two steps. First, given a regular distribution $F$, we construct a slightly non-regular distribution $\tilde{F}$ such that $Pr[Y\geq E[Y]]\geq Pr[\tilde{Y}\geq E[\tilde{Y}]]$, where $\tilde{Y}$ is the $t$-th largest valuation of $n$ i.i.d.\ draws from $\tilde{F}$.
This new distribution $\tilde{F}$ has corresponding revenue function $\tilde{R}(q)=a\cdot q+b$ for $q\in (0,1]$ for
some $b>0$ and $a+b\geq0$, and it then suffices to show that $Pr[\tilde{Y}\geq E[\tilde{Y}]]\geq 1/4$ for such distributions.

Given any regular distribution $F$, let $z=1-F(E[Y])$, and consider the distribution $\tilde{F}$ corresponding to the revenue
function $\tilde{R}$ such that $\tilde{R}(z)=R(z)$ and $\tilde{R}'(q)=R'(z)$ for all $q\in(0,1]$.
In other words, $\tilde{R}$
is the line segment that is tangent with $R$ at $z$. By concavity
of $R$, we have $\tilde{R}(q)\geq R(q)$ for
all $q\in(0,1]$.

To aid the analysis, let $Q_{t,n}$ be the $t$-th order statistics (i.e., the $t$-th smallest valuation) of $n$ i.i.d.\ draws from the uniform distribution over $[0,1]$. Therefore for all $y$, $Pr[Y\geq y]=Pr[Q_{t,n}\leq 1-F(y)]$ and similarly for $\tilde{Y}$ and $\tilde{F}$.
Let $\tilde{z}=1-\tilde{F}(E[\tilde{Y}])$.
Then to show that $Pr[Y\geq E[Y]]\geq Pr[\tilde{Y}\geq E[\tilde{Y}]]$, it suffices to show that $Pr[Q_{t,n}\leq\tilde{z}]\leq Pr[Q_{t,n}\leq z]$, or simply that $\tilde{z}\leq z$.

Recall that $\tilde{R}(q)\geq R(q)$ for all $q$.
Therefore $\tilde{F}(v)\leq F(v)$ for all $v$, and hence $E[\tilde{Y}]\geq E[Y]$.
Also recall that  $\tilde{R}(z)=R(z)$.
Therefore $\tilde{F}^{-1}(1-z)=F^{-1}(1-z)=E[Y]\leq E[\tilde{Y}]=\tilde{F}^{-1}(1-\tilde{z})$. So $\tilde{z}\leq z$.

Now we prove that $Pr[\tilde{Y}\geq E[\tilde{Y}]]\geq 1/4$. Let distribution $\tilde{F}$ be such that the corresponding revenue function
is $\tilde{R}(q)=a\cdot q+b$ for some $b\geq0$ and $a+b\geq0$. Let $f_{t,n}(q)=\frac{n!}{(t-1)!(n-t)!}q^{t-1}(1-q)^{n-t}$
be the density function of $Q_{t,n}$. Then 

\begin{eqnarray*}
E[\tilde{Y}] & =& \int_{q=0}^{1}f_{t,n}(q)\cdot\frac{\tilde{R}(q)}{q}dq\\
&=&\int_{q=0}^{1}f_{t,n}(q)\cdot(a+\frac{b}{q})dq\\
&=&a+b\cdot\frac{n}{t-1},
\end{eqnarray*}
 where we use the facts that $\frac{1}{q}\cdot f_{t,n}(q)=\frac{n}{t-1}\cdot f_{t-1,n-1}(q)$
and that $f_{t,n}$ and $f_{t-1,n-1}$ as density functions both integrate to $1$.
Note that when $q=\frac{t-1}{n}$, $\tilde{R}(q)/q=a+b/q=E[\tilde{Y}]$. Therefore
$1-\tilde{F}(E[\tilde{Y}])=\frac{t-1}{n}$, and hence $Pr[\tilde{Y}\geq E[\tilde{Y}]]=Pr[Q_{t,n}\leq\frac{t-1}{n}]$.

Note that for $n$ i.i.d.\ draws from the uniform distribution over $[0,1]$, the $t$-th order statistic is at most $\frac{t-1}{n}$ if and only if the number of draws that are at most $\frac{t-1}{n}$ is at least $t$. Let $B$ be this number, which is a binomial variable with parameter $n$ and $\frac{t-1}{n}$. Then $Pr[Q_{t,n}\leq\frac{t-1}{n}]=Pr[B\geq t]$, and by properties of binomial distribution, $Pr[B\geq t]$ is at least
$1/4$, where $1/4$ is achieved when $t=n=2$.

\end{proof}

Based on Lemma \ref{lem:tail}, we can prove the following risk-averse version of the classical result of Bulow and Klemperer \cite{BK96}. 
(We suspect that an exact version holds without the approximation factor $1/4$; to recover the statement original result, replace `utility' by `revenue' and remove the `$1/4$'.)

\begin{lem}
\label{lem:bk-utility}Suppose valuations of bidders are drawn i.i.d.\ from a regular distribution. The optimal expected utility when selling $k$ items to $n$ bidders is at most $1/4$
times the expected utility of the VCG mechanism when selling $k$ items to $n+k$ bidders.
\end{lem}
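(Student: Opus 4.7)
The plan is to chain together three ingredients: Fact~\ref{fact:upperbound} (which lets us pass from expected utility to the utility function applied to expected revenue), the classical Bulow--Klemperer theorem for revenue (which compares Myerson's revenue on $n$ bidders to VCG's revenue on $n+k$ bidders), and Lemma~\ref{lem:tail} (which lets us convert $u$ of expected VCG revenue back into expected $u$ of VCG revenue, at a cost of a factor of $1/4$). The first two steps are standard; the third is where the $1/4$ appears and is the point of the lemma.

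First I would write $Rev(VCG_{k,n+k}) = k\cdot Y$, where $Y$ is the $(k{+}1)$-st largest of $n+k$ i.i.d.\ draws from $F$, and invoke Lemma~\ref{lem:tail} with parameters $t=k+1$ and sample size $n+k$ (so $1<t\le n+k$ as required). This gives $\Pr[Y\ge E[Y]]\ge 1/4$. Since $u$ is nondecreasing and nonnegative,
\begin{equation*}
E[u(k\cdot Y)] \;\ge\; \Pr[Y\ge E[Y]]\cdot u(k\cdot E[Y]) \;\ge\; \tfrac{1}{4}\,u\bigl(E[Rev(VCG_{k,n+k})]\bigr).
\end{equation*}

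Next I would use the standard (risk-neutral) Bulow--Klemperer theorem, which for i.i.d.\ regular distributions states that
\begin{equation*}
E[Rev(Mye_{k,n})] \;\le\; E[Rev(VCG_{k,n+k})],
\end{equation*}
where $Mye_{k,n}$ denotes Myerson's revenue-optimal $k$-unit mechanism on $n$ bidders. Applying the monotone $u$ preserves the inequality. Finally, for the utility-optimal mechanism $OPT$ on $n$ bidders and $k$ items, Fact~\ref{fact:upperbound} gives $E[u(Rev(OPT))]\le u\bigl(E[Rev(Mye_{k,n})]\bigr)$. Chaining:
\begin{equation*}
E[u(Rev(OPT))] \;\le\; u\bigl(E[Rev(Mye_{k,n})]\bigr) \;\le\; u\bigl(E[Rev(VCG_{k,n+k})]\bigr) \;\le\; 4\cdot E[u(Rev(VCG_{k,n+k}))],
\end{equation*}
which is exactly the statement of the lemma.

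The main obstacle is the third inequality: Jensen's inequality goes the wrong way for concave $u$, so we cannot simply compare $u(E[\cdot])$ to $E[u(\cdot)]$ for free. This is precisely where a probabilistic lower bound of the form $\Pr[Y\ge E[Y]]\ge c$ is needed, and it is the reason Lemma~\ref{lem:tail} was developed (and why the $1/4$ factor, rather than an exact factor of $1$ as in classical Bulow--Klemperer, shows up here). The rest of the argument is bookkeeping: verifying the hypotheses on $t$ and $n+k$ for Lemma~\ref{lem:tail}, and checking that nonnegativity and monotonicity of $u$ (with $u(0)=0$) justify discarding the contribution from the event $\{Y<E[Y]\}$.
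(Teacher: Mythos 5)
Your proposal is correct and follows essentially the same route as the paper: Fact~\ref{fact:upperbound} to pass to $u$ of Myerson's expected revenue, the classical Bulow--Klemperer theorem plus monotonicity of $u$, and then Lemma~\ref{lem:tail} (with $t=k+1$ on $n+k$ draws) together with nonnegativity of $u$ to recover $E[u(\cdot)]$ from $u(E[\cdot])$ at a factor $1/4$. Your reading of the bound as ``optimal expected utility $\le 4\cdot$ VCG's expected utility on $n+k$ bidders'' is exactly what the paper's own proof establishes.
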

\begin{proof}

We will let superscripts in $VCG^{k,n}$ or $Mye^{k,n}$ denote that
we are selling $k$ items to $n$ bidders. By Fact \ref{fact:upperbound},
the optimal expected utility of selling $k$ items to $n$ bidders is at most
$u(E_{\mathbf{v}}[Rev(Mye^{k,n},\mathbf{v})])$, which by the classic Bulow-Klemperer result \cite{BK96} and the monotonicity of $u$ is at most $u(E_{\mathbf{v}}[Rev(VCG^{k,n+k},\mathbf{v})])$.
Note that the revenue of $VCG^{k,n+k}$ is $k$ times $Y=F^{-1}(1-Q_{k+1,n+k})$,
where $Q_{k+1,n+k}$ is the $k+1$-th order statistics of $n+k$ i.i.d.\ draws from a uniform distribution over $[0,1]$. By Lemma \ref{lem:tail}, we have $Pr[Y\geq E[Y]]\geq1/4$. Our lemma follows
because the utility of $VCG^{k,n+k}$ would be at least $1/4\cdot u(k\cdot E[Y])=1/4\cdot u(E_{\mathbf{v}}[Rev(VCG^{k,n+k},\mathbf{v})])$.
\end{proof}

The following claim bounds the loss of optimal utility in dropping $k$ bidders.

\begin{lem}
\label{lem:removing-k-bidders}Suppose valuations of bidders are drawn i.i.d.\ from a regular distribution. 
The optimal expected utility when selling $k$ items to $n-k$ bidders is at least $1-k/n$ fraction
of the optimal expected utility when selling $k$ bidders to $n$ bidders.
\end{lem}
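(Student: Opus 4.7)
The plan is to exhibit, for the $(n-k)$-bidder problem, a truthful, individually rational mechanism $M$ whose expected utility is at least $(1-k/n)\,U^{k,n}$, where $U^{k,n}$ denotes the optimal expected utility for $n$ bidders. Once that is done, the lemma follows immediately from the optimality of $U^{k,n-k}$.

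The construction of $M$ is the obvious ``phantom-bidder'' reduction. Given the $n-k$ real bids, the seller samples $k$ additional bids i.i.d.\ from $F$ (it can do so since $F$ is known), places all $n$ bids in a uniformly random order, runs the utility-optimal $n$-bidder mechanism $OPT^{k,n}$ on this permuted profile, and only honours the allocations and collects the payments assigned to real bidders. Because the phantom bids and the permutation are independent of the real bids, each real bidder's allocation and payment is a truthful, IR function of her own bid, inherited directly from $OPT^{k,n}$; and at most $k$ items are ever allocated.

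The analytical step is to lower-bound $E[u(\mathrm{Rev}(M))]$. View the concatenated bid profile as a single i.i.d.\ vector $\mathbf{v}\in\mathbb{R}^n$, and let $S\subseteq [n]$ be the (random) set of $n-k$ positions occupied by real bidders, which is uniform over all size-$(n-k)$ subsets and independent of $\mathbf{v}$. Writing $p_i(\mathbf{v})$ for the payment of bidder $i$ under $OPT^{k,n}$ and $\mathrm{Rev}^S(\mathbf{v})=\sum_{i\in S}p_i(\mathbf{v})$, the expected utility of $M$ is exactly $E_{\mathbf{v},S}[u(\mathrm{Rev}^S(\mathbf{v}))]$. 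Fix $\mathbf{v}$ with $R:=\mathrm{Rev}(OPT^{k,n},\mathbf{v})>0$ (the $R=0$ case is trivial) and set $\alpha_S:=\mathrm{Rev}^S(\mathbf{v})/R\in[0,1]$. Since $u$ is concave with $u(0)=0$, one has $u(\alpha R)\geq \alpha\,u(R)$ for every $\alpha\in[0,1]$, so
\begin{equation*}
E_S\!\left[u(\mathrm{Rev}^S(\mathbf{v}))\right]\;\geq\;E_S[\alpha_S]\cdot u(R)\;=\;\tfrac{n-k}{n}\,u(R),
\end{equation*}
where $E_S[\alpha_S]=(n-k)/n$ because $\Pr[i\in S]=(n-k)/n$ for every $i$ and $\sum_i p_i(\mathbf{v})=R$. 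Taking expectation over $\mathbf{v}$ and invoking Fact~\ref{fact:upperbound}-style book-keeping (simply $E_{\mathbf{v}}[u(R(\mathbf{v}))]=U^{k,n}$ here) yields $E[u(\mathrm{Rev}(M))]\geq (1-k/n)U^{k,n}$, as required.

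There is no genuinely hard step; the only substantive ingredient is the superhomogeneity inequality $u(\alpha x)\geq \alpha u(x)$, which is exactly where concavity and the normalization $u(0)=0$ are used. The one place worth being explicit is the transition from $OPT^{k,n}$ on $n$ i.i.d.\ bidders to its symmetric behaviour under a uniformly random relabelling; this is what lets us treat $S$ as independent of $\mathbf{v}$ and average $\alpha_S$ straightforwardly.
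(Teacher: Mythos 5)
Your proof is correct and follows essentially the same route as the paper's: a uniformly random $(n-k)$-subset of bidders accounts for each payment with probability $(n-k)/n$, and the superhomogeneity inequality $u(\alpha x)\geq \alpha u(x)$ (from concavity and $u(0)=0$) converts this into a $(1-k/n)$ utility guarantee, after which optimality of the $(n-k)$-bidder benchmark finishes the argument. The only difference is packaging: you realize the dropped bidders as freshly sampled phantom bids and keep the resulting $(n-k)$-bidder mechanism randomized, whereas the paper fixes the subset $S$ and the outside bids $\mathbf{v}_{-S}$ by an averaging argument, which in particular also covers the case where the optimum over $n-k$ bidders is taken over deterministic mechanisms.
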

\begin{proof}
Let $M$ be a utility-optimal mechanism for selling $k$ items to
$n$ bidders $N=\{1,2,\dots,n\}$. For any subset $S$ of bidders, let random variable
$R_{S}$ be the revenue we collect from $S$ in $M$. Then the expected utility of running $M$ on all bidders is $E_{\mathbf{v}}[u(R_{N})]$.
Suppose we randomly select a set $S$ of size $n-k$. Then we have:
\begin{eqnarray*}
& & E_{\mathbf{v},S}[u(R_{S})]\\
&\geq& E_{\mathbf{v}}[E_S[u(R_{N})\cdot\frac{R_{S}}{R_{N}}]]\\
&=& E_{\mathbf{v}}[u(R_{N})\cdot E_S[\frac{R_{S}}{R_{N}}]]\\
&=&(1-\frac{k}{n})\cdot E_{\mathbf{v}}[u(R_{N})]
\end{eqnarray*}

Here the inequality is by the concavity of $u$ and that $R_{S}\leq R_{N}$,
and the second equality is due to the fact that every bidder's revenue is
accounted in $R_{S}$ with probability $1-k/n$. By an averaging argument,
for some set $S$ of $n-k$ bidders, and for some fixed bids $\mathbf{v}_{-S}$
of bidders outside of $S$, the mechanism $M$ induced on $S$ has
expected utility that is at least $1-k/n$ fraction of the expected utility of running $M$ on all bidders. Our lemma follows
because the utility-optimal mechanism on $n-k$ bidders can only do
better than this induced mechanism.
\end{proof}

Now Theorem \ref{thm:vcg} follows by chaining the inequalities from Lemma \ref{lem:bk-utility} and Claim \ref{lem:removing-k-bidders}.

\section{Conclusions and Open Problems}
\label{sec:open}

In this paper, we identify truthful mechanisms for multi-unit auctions that offer universal constant-factor approximations for all risk-averse sellers, no matter what their levels of risk-aversion are. We hope that this paper spurs interest in the design and analysis of mechanisms for risk-averse sellers.

We see several open directions. For instance, identifying better mechanisms for the auction settings studied in this paper, identifying mechanisms for more combinatorial auction settings, and designing online mechanisms that adapt prices based on previous sales. We conclude by singling out a specific challenge: can we characterize the utility-optimal mechanism for a seller with a fixed known utility function? What if the seller's utility function has additional structure--for instance, it satisfies constant (absolute or relative) risk aversion? (Section~\ref{sec:benchmark} discusses how the standard approach from Myerson~\cite{M81} does not work for multi-item auctions.)



\appendix
\section{Missing Proofs}
\subsection{Proof of Lemma \ref{lem:u-regular}}

\begin{lem}
\label{lem:u-regular}Let $F$ be a regular distribution. For any
 concave utility function $u$, $\phi^u_F(v)$ is nondecreasing.
\end{lem}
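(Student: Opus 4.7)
The plan is to pass to quantile space $q = 1 - F(v)$ and exploit the characterization of regularity as concavity of the revenue curve $R_F$. Let $p(q) = F^{-1}(1-q)$, so $R(q) = q\,p(q)$ and $p'(q) = -1/f(p(q))$. As the paper notes right before the definition of $\phi^u_F$, the virtual utility is the $q$-derivative of the expected utility $u(v)(1-F(v))$ from a posted price $v$. Writing this in quantile space, with $g(q) := q\cdot u(p(q))$, one verifies directly that
\[
g'(q) \;=\; u(p(q)) + q\cdot u'(p(q))\cdot p'(q) \;=\; u(v) - \frac{u'(v)}{h(v)} \;=\; \phi^u_F(v),
\qquad v = p(q).
\]
Because $v = p(q)$ is strictly decreasing in $q$, showing that $\phi^u_F(v)$ is nondecreasing in $v$ is equivalent to showing that $g'(q)$ is nonincreasing in $q$, i.e.\ that $g$ is concave. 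So the whole lemma reduces to establishing concavity of $g$.

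Next I would just differentiate once more. Writing $p = p(q)$,
\[
g''(q) \;=\; u'(p)\bigl(2\,p'(q) + q\,p''(q)\bigr) \;+\; q\cdot u''(p)\cdot (p'(q))^{2}.
\]
The key observation is that $2\,p'(q) + q\,p''(q) = R''(q)$, since $R(q) = q\,p(q)$. Regularity of $F$ gives $R''(q)\le 0$; concavity of $u$ gives $u''(p)\le 0$; monotonicity and normalization of $u$ give $u'(p)\ge 0$; and $q\ge 0$. Each summand is therefore nonpositive, so $g'' \le 0$, giving the desired concavity of $g$ and hence monotonicity of $\phi^u_F$.

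The main obstacle is really just bookkeeping: one has to set up the quantile-space reformulation cleanly so that regularity (concavity of $R$) and the shape properties of $u$ (concave, nondecreasing, $u(0)=0$) combine into a single sign argument via the identity $2p'(q) + q p''(q) = R''(q)$. Once that change of variables is in place, the whole verification is a two-line calculation. A more conceptual alternative, which I would mention but not develop, is that $g(q) = q\,u(R(q)/q)$ is the perspective of $u$ evaluated on the concave curve $(R(q), q)$, so concavity of $g$ also follows from standard convex-analysis closure properties; the direct second-derivative computation above is simpler and explicitly exhibits where each hypothesis is used. (A minor technical point: concavity of $u$ makes it twice differentiable almost everywhere, which is enough to run this argument; alternatively one can phrase the conclusion in terms of nonincreasingness of $g'$ directly, without invoking $g''$.)
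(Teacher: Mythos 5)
Your proof is correct, but it takes a different route from the paper's. The paper works directly in value space: it differentiates $\phi_F^u(v)=u(v)-u'(v)/h(v)$ with respect to $v$ and invokes regularity in its virtual-valuation form, namely $\frac{d}{dv}\bigl(v-\tfrac{1}{h(v)}\bigr)=1+\tfrac{h'(v)}{h^2(v)}\geq 0$, to conclude
$\frac{d}{dv}\phi_F^u(v)=u'(v)\bigl(1+\tfrac{h'(v)}{h^2(v)}\bigr)-\tfrac{u''(v)}{h(v)}\geq 0$,
each term being nonnegative by monotonicity and concavity of $u$. You instead pass to quantile space, recognize $\phi_F^u(p(q))$ as $g'(q)$ for the quantile-space expected-utility curve $g(q)=q\,u(p(q))$, and show $g$ is concave via the identity $g''(q)=u'(p)\,R''(q)+q\,u''(p)\,(p'(q))^2\leq 0$, using regularity in its concave-revenue-curve form. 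The two computations are of comparable length and use exactly the same hypotheses (your appeal to normalization $u(0)=0$ is not actually needed; monotonicity alone gives $u'\geq 0$). What your version buys is a cleaner conceptual statement that fits the rest of the paper: regularity (concavity of $R$) is inherited by the ``utility curve'' $q\mapsto q\,u(F^{-1}(1-q))$ for every nondecreasing concave $u$, which is the same revenue-curve viewpoint the paper exploits in Lemmas~\ref{lem:regular} and~\ref{lem:tail}. What the paper's version buys is that it stays within the hazard-rate/virtual-value formalism used in Lemma~\ref{lem:virtual_utility}, exhibiting the derivative of the virtual utility itself as a nonnegative combination, which is the form needed when one mimics Myerson's argument. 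Your closing remarks on differentiability are reasonable and consistent with the paper's standing smoothness assumptions on $F$.
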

\begin{proof}
Since $F$ is regular, $\phi^u_{F}(v)=(v-\frac{1}{h(v)})'=1+\frac{h'(v)}{h^{2}(v)}\geq0$.
Then:
\begin{eqnarray}
\frac{d\phi_{F}^{u}(v)}{dv} & = & (u(v)-\frac{u'(v)}{h(v)})'\\
 & = & u'(v)-\frac{u''(v)h(v)-u'(v)h'(v)}{h^{2}(v)}\\
 & = & u'(v)\cdot(1+\frac{h'(v)}{h^{2}(v)})-\frac{u''(v)}{h(v)}\\
 & = & u'(v)\cdot\phi_{F}^u(v)-\frac{u''(v)}{h(v)}\\
 & \geq & 0\end{eqnarray}
\end{proof}

\end{document}